\DeclareFontFamily{OT1}{pzc}{}
\DeclareFontShape{OT1}{pzc}{m}{it}{<-> s * [1.10] pzcmi7t}{}
\DeclareMathAlphabet{\mathpzc}{OT1}{pzc}{m}{it}
\numberwithin{equation}{section}		
\numberwithin{figure}{section}			
\numberwithin{table}{section}				
\newcommand*{\al}{\ensuremath{\alpha}}
\newcommand*{\be}{\ensuremath{\beta}}
\newcommand*{\ga}{\ensuremath{\gamma}}
\newcommand*{\Ga}{\ensuremath{\Gamma}}
\newcommand*{\de}{\ensuremath{\delta}}
\newcommand*{\eps}{\ensuremath{\epsilon}}
\newcommand*{\la}{\ensuremath{\lambda}}
\newcommand*{\La}{\ensuremath{\Lambda}}
\newcommand*{\Si}{\ensuremath{\Sigma}}
\newcommand*{\om}{\ensuremath{\omega}}
\newcommand*{\pa}{\partial}
\newcommand*{\bra}{\langle}
\newcommand*{\ket}{\rangle}
\newcommand*{\ra}{\rightarrow}
\newcommand*{\rt}{\triangleright}
\newcommand*{\tr}{\mathrm{tr}}
\newcommand*{\Tr}{\mathrm{Tr}}
\newcommand*{\mr}{\mathrm} 
\newcommand*{\mc}{\mathcal} 
\newcommand*{\ms}{\mathscr} 
\newtheorem*{lemma}{Lemma}
\title{\vfill
\textbf{Poincar\'{e}-Pleba\'{n}ski formulation of GR \\
		 and dual simplicity constraints}
}
\author{Vadim Belov~\thanks{E-mail: vadim.belov@desy.de} \vspace{5pt}
\\
\emph{\normalsize II. Institute for Theoretical Physics, University of Hamburg,} \\ 
\emph{\normalsize Luruper Chaussee 149,} \\
\emph{\normalsize 22761 Hamburg, Germany}
}
\date{\normalsize \today}
\begin{document}

\maketitle
\thispagestyle{empty}

\addto{\captionsenglish}{\renewcommand{\abstractname}{Abstract}}
\begin{abstract}
We revise the classical continuum formulation behind the Spin Foam approach to the quantization of gravity. Based on the recent applications of the current EPRL-FK model beyond triangulations, we identify the tension with the implementation of the `volume' part of simplicity constraints, required for the passage from the topological BF theory to gravity. The crucial role, played by 4d normals in the linear version of constraints, necessitates the extension of the configuration space, and we argue to switch from normal 3-forms directly to tetrads. The requirement of vanishing torsion leads to consider first an unconstrained extended Poincar\'{e} BF theory, which we characterize fully both at the Lagrangian and Hamiltonian levels, paying special attention to its gauge symmetries. The simplicity constraints are introduced naturally, in the spirit of Pleba\'{n}ski formulation, and we give their tetradic version, dual to that of using 3-forms. This brings us much closer to the geometric content of General Relativity.
\end{abstract}

\hfill

\tableofcontents
\vfill
\newpage
\section{Introduction and motivation}

Loop Quantum Gravity (LQG) and Spin Foams (SF) are the two ambitious non-perturbative approaches to the quantization of gravity. They are usually thought of to be complementary to each other, although not entirely equivalent. The precise correspondence between them is still an open issue. The distinction shows up already at the level of the classical starting points of the two theories, operating with the two dissimilar sets of variables. The first one performs the canonical quantization of the tetradic Holst action~\cite{Holst1996action} (in the time gauge). Whilst the second one is the implementation of the discrete path-integral for gravity~\cite{Perez2013SF-review}, based on its Pleba\'{n}ski reformulation as a constrained BF theory. The latter naturally generalizes the notion of the state-sum/partition function of Topological Quantum Field Theory (TQFT)~\cite{Barrett1995QG-TQFT,Oeckl2003general-bndry}. It can be used as a way to define the dynamics of LQG, by constructing (the family of) transition amplitudes (for the kinematical states of canonically quantized theory). 



In Spin Foams, the discretized variables of BF theory live on (abstract) 2-complexes, typically thought of as being dual to triangulations. The relation to General Relativity (GR), in the form of some discrete geometry data, is established through imposition of the simplicity constraints. The current EPRL~\cite{EPRL-FK2008flipped2,EPRL-FK2008finiteImmirzi} and FK~\cite{EPRL-FK2008FK,EPRL-FK2008LS} models are based on the so-called linear formulation, involving the time-normal vector field. In result, the boundary states are spanned by the `projected spin-networks', where these normals, discretized at the nodes, appear as arguments on par with connection variables~\footnote{The role of these normals has been highlighted by S. Alexandrov on the grounds of Lorentz covariance, see \cite{AlexandrovRoche2011CovariantLQG-critique} and references therein; they have also been studied in the context of (extended) Group Field Theory formalism with non-commuting variables~\cite{BaratinOriti2012BO-model}}. The EPRL quantization map thus defines the parameters of the covariant lift of the usual LQG spin-network states~\cite{LivineDupuis2010lifting-SU2}, and one may regard each SF as contribution to the sum over histories of canonically quantized geometries~\cite{ReisenbergerRovelli1997SFfromLQG-sum}. The present work reconsiders the classical formulation of linear simplicity constraints~\cite{GielenOriti2010Plebanski-linear}, and, in particular, the role of the degrees of freedom (d.o.f.) behind the 4d normals, based on the recent findings in the SF asymptotics.


In the absence of experiment guidance, the major consistency check for the model is the peakedness on the classical geometries, namely on the solutions of the Einstein equations, in a suitable semi-classical regime ($\hbar\ra0$). Using Barrett's reconstruction theorem and (extended) stationary phase methods (when the typical scale -- the physical area in Planck units -- is large), the EPRL-FK amplitude correctly reproduces the 4-simplex geometry and Regge gravity, for certain fixed, non-degenerate boundary data~\cite{Barrett-etal2009asympt4simplex-euclid,Barrett-etal2010asympt4simplex-lorentz}. On the other hand, the KKL extension~\cite{KKL2010AllLQG,KKL2010correctedEPRL} of the simplicial EPRL amplitude to the graphs of arbitrary valence, has not been thoroughly studied until very recently~\cite{Dona-etal2017KKL-asumpt}, although the partial results were available in the symmetry reduced setting~\cite{BahrSteinhaus2016Cuboidal-EPRL}. These results include appearance of certain `non-geometric' configurations, demonstrating shape-mismatch, as well as non-zero physical norm of states with torsion. They are SF analogues of LQG's `twisted' geometries~\cite{FreidelSpeziale2010Twisted-geometries} (discontinuous over flat faces), or torsionful  `spinning' (continuous over arbitrarily curved faces) piecewise-flat geometries~\cite{FreidelGeillerZiprick2013cont-LQG-phase-space,FreidelZiprick2014Spinning-geometry}~\footnote{The non-zero torsion generically presents in LQG phase space by the Lemma 2 in~\cite{FreidelGeillerZiprick2013cont-LQG-phase-space}, since the Ashtekar-Barbero connection mixes up extrinsic curvature $A=\Ga[e]+\ga \mathrm{K}$ (its contribution is governed by the Immirzi parameter), residing over the edges of the cellular complex.}; whilst Regge configurations appear only as a constrained subset~\cite{DittrichRyan2011simplicial-phase-space}.

In Sec.~\ref{ssec:volume-fate}, we scrutinize the example of hypercuboid and trace back the `non-geometricity' to the way how the simplicity constraints are imposed in the classical theory. In particular, the possibility to neglect the 4-volume constraint in the simplex does not hold for more complicated polytopes, so that Barrett's reconstruction is not applicable. We then proceed with the application to the same system of the fully linear treatment due to Gielen and Oriti~\cite{GielenOriti2010Plebanski-linear}, with independent normals. The workings of their `linear volume' constraint prompt to switch from the normals (3-forms) directly to edge lengths (tetrads) as new independent variables, using the Hodge duality. In the rest of the paper we study the implications of that change on the classical continuum theory.

First, one may want to incorporate the 4d closure condition on normals (e.g. in the 4-simplex~\cite{GielenOriti2010Plebanski-linear}), or -- as we show to be closely related -- the vanishing torsion (i.e. 2d closure for tetrads). In addition to already noted possible link between `non-gemetricity' and torsion, it has also been argued on the basis of more involved examples, such as $n$-point correlation functions and extended triangulations with the bulk curvature. In the latter example, the actual details of taking (semi-)classical limit require to invoke the flipped regime of small Barbero-Immirzi parameter $\ga\ra 0$, in addition to the usual scaling of spins $j\ra\infty$. Supposedly, the large $\ga^{-1}$ in front of the Holst term in the exponent weight of a path integral leads to the dominant contributions from the stationary phase/critical configurations, satisfying the respective equations of motion (e.o.m.) -- namely, the simplicial version of the Cartan structure equation: the rotations are trivial (up to~$\pi$) in the plane of the hinge. Thus the Immirzi parameter effectively takes control over the strength of another ``geometricity constraints ... needed in order to reduce the SF dynamical variables to the configurations compatible with the metric geometry of the triangulation'' (see~\cite{MagliaroPerini2013asympt-Regge} and references therein).

However, neither 4d closure, nor zero-torsion is the part of the theory in the same sense as the bivector closure. Namely, there is no corresponding dynamical law (like the Gauss constraint) from which such condition would follow as (discretized) equations of motion. Therefore we propose in the Sec.~\ref{sec:Poincare-BF} the modification of the BF action by the zero-torsion, imposed via Lagrange multipliers. This also seems natural from the point of view of the contact with GR, rather then the Einstein-Cartan (EC) theory. We perform a comprehensive study of this model system prior to imposition of simplicity constraints. This turns out to be topological, with the gauge group being the non-homogeneous Poincar\'{e} (affine) extension of the usual homogeneous Lorentz group~\footnote{Such a model was first considered in~\cite{Bi2013PoincareBF}, as pointed out by the referee. It is closely related (equivalent via integration by parts, i.e. up to boundary terms) to the corresponding topological Higher Gauge Theory, or 2-group Categorical Generalization~\cite{GirelliPfeifferPopescu2008BFCG}.}. The gauge symmetries are defined generically as leaving the action invariant off-shell (up to divergence), and we explicitly derive them from the form of equations of motion in the covariant framework, using the converse of the Noether's 2nd theorem. The Dirac's constrained Hamiltonian analysis is performed, having the aim to demonstrate via explicit construction of the gauge generator, which maps solutions of e.o.m. onto solutions, that the full 4d symmetry persists on the canonical level as well, although the manifest covariance may be explicitly broken. 

At last, the manifest presence of the tetrad frame field $e$ in the formalism among configuration variables -- instead of the normal 3-forms -- makes the introduction of simplicity constraints especially natural~\footnote{This stratedy was also advocated within the constrained BFCG approach~\cite{Mikovic-etal2012BFCG-Poincare}, from which our differs in 2 respects: 1) no categorical generalization is implied or required, cf.~\cite{GegenbergMann1999Poincare-BF-gravity}, 2) exploration of the new form of \emph{linear} simplicity constraints.}. We furthermore propose to look for the linear formulation of~\cite{GielenOriti2010Plebanski-linear} in the new guise and introduce its dual version in Sec.~\ref{sec:Poincare-Plebanski}. The equivalence with the usual simplicity of bivectors is proven. The advantage is in the clear-cut geometric interpretation of the `volume' part of simplicity constraints in terms of an actual 3-volume, in analogy to the 4-volume of the quadratic case.
Besides, the separation between Pleba\'{n}ski and 1st order formulations gets blurred to some extent, as their variables are brought together. This resonates nicely with our very first comment on the dissimilarity of the two classical foundations of SF and LQG, respectively. 

The structure is as follows. The first half of the paper consists of the brief recap on quadratic Pleba\'{n}ski formulation (including the role of 4-volume) in Sec.~\ref{sec:classics}, and our revision of the constraints in the EPRL-FK-type models in Sec.~\ref{sec:quantization}. Special attention is paid to the role of normals. The familiar reader may skip the exposition and start reading with Sec.~\ref{ssec:volume-fate}, where the case study of the volume constraints in the EPRL-KKL hypercuboid set-up is performed. This should justify our shift from normals to tetrads in the classical study of the second part of the paper. The Sec.~\ref{sec:Poincare-BF} presents the self-contained primer on the Poincar\'{e}-BF theory and, as such, can be read independently. The Sec.~\ref{sec:Poincare-Plebanski} contains reformulation of linear simplicity in terms of new variables. Finally, we comment on the relations between the various action principles, draw some conclusions in Sec.~\ref{sec:summary} and discuss on possible outlook for quantization.


\section{The recap on constraints: classics}\label{sec:classics}
\paragraph{The setting.} The classical backdrop behind the Spin Foam quantization program  is the observation due to Pleba\'{n}ski~\cite{Plebanski1977,Celada-etal2016BF-review} that the Einstein-Cartan action can be recast as a constrained BF theory:
\begin{equation}\label{eq:BF+cnstr}
S[B,\om, \la] \ = \ \int_{\mc M} B_{AB}^{\phantom{AB}}\wedge F^{AB}[\om] \ + \ \la^\al  \mc C_\al[B].
\end{equation}
Here $A,B=0,1,2,3$ -- (internal, or anholonomic) indices in the defining vector representation of the homogeneous Lorentz group $H=SO(3,1)$; $B$ is the Lie algebra $\mathfrak{h=so}(3,1)$-valued 2-form (or $\mathfrak{h=so}(4)$ for Euclidean signature spacetime~$\mc M$), transforming in the (co)adjoint representation; $\om$~is the spin-connection with the curvature $F$. The 1st $\int BF$ term, taken on its own, defines a topological field theory without local degrees of freedom; it admits a well-defined exact state sum quantization, discretized over 2-complexes (\textit{\`{a} la} Spin Foam). The 2nd term represents constraints $\mc C_\al[B]=0$ on $B$-field, enforced by the Lagrange multipliers $\la^\al$ ($\al$ -- multi-index); they effectively reduce the number of independent $B$-components, so that the bivector is given by the (dual) simple product $B=\star e\wedge e$ of (some) tetrad frame field $e$. Hence, on the constraint surface, the theory acquires the form due to Einstein-Cartan:
\begin{equation}\label{eq:Einstein-Cartan}
S_{\mathrm{EC}}[e,\om] \ = \ \int_{\mc M} \frac12 \eps_{ABCD}^{\phantom{ABCD}}\,e^A\wedge e^B\wedge F^{CD}[\om].
\end{equation}
In its turn, it is widely accepted as a 1st order formulation of a theory of gravity, since given the equations of motion for $\om$ in vacuum are satisfied, this renders the theory (on-shell) to the 2nd order tetradic Einstein-Hilbert action: $S_{\mathrm{EC}}\big|_{\de\om} = S_{\mathrm{EC}}[e,\om[e]] \equiv S_{\mathrm{EH}}[e]$.

\paragraph{The strategy} in the majority of Spin Foam approaches is to \textit{first quantize and then constrain}, according to the following route:
\begin{enumerate}
\item discretize the classical theory on a piecewise-flat partition of the spacetime $\mc M$ (most commonly, simplicial);
\item quantize the topological BF part of the discretized theory;
\item impose (a version of) simplicity constraints $\mc C_\al[B]\approx 0$ directly at the quantum level.
\end{enumerate}
The non-trivial part in constructing SF models for gravity comes from the third step.

The most widely known and well studied is the Pleba\'{n}ski's quadratic set of constraints, existing in 2 versions:
\begin{equation}\label{eq:Plebanski-cnstr}
\text{(a)} \qquad B^{AB}\wedge B^{CD} \ = \ V \, \eps^{ABCD} \quad\qquad \stackrel{\tilde{V}\neq 0}{\Longleftrightarrow} \qquad\quad \text{(b)} \qquad \eps_{ABCD}^{\phantom{ABCD}}B^{AB}_{ab} B^{CD}_{cd} \ = \ \tilde{V} \, \eps_{abcd}^{\phantom{ABCD}}.
\end{equation}
They are equivalent, provided the quantity $V$ (resp. $\tilde{V}$) -- which is defined by~\eqref{eq:Plebanski-cnstr} through contraction with $\eps$ -- is non-vanishing~\cite{DePietriFreidel1999Plebanski}. In that case, there are two non-degenerate sectors of solutions:
\begin{equation}\label{eq:Plebanski-sectors}
I^\pm: \quad B^{AB} \ = \ \pm e^A\wedge e^B,  \qquad\qquad II^\pm: \quad B^{AB} \ = \ \pm \frac12 \eps^{AB}_{\phantom{AB}CD}e^C\wedge e^D,
\end{equation}
and $V=\pm\frac{1}{4!}\eps^{\phantom{A}}_{ABCD}e^A\wedge e^B\wedge e^C\wedge e^D=\tilde{V}\,d^4x$ acquires an interpretation of spacetime 4-volume.
The sectors $II^\pm$ reproduce \eqref{eq:Einstein-Cartan} up to the discrete sign ambiguity, while $I^\pm$-sectors give the topological Holst term. The treatment of degenerate case $\tilde{V}=0$, and relations between sectors can be found in \cite{Reisenberger1999Plebanski}.

\paragraph{The discretizations} of classically equivalent forms of constraints \eqref{eq:Plebanski-cnstr} lead to two, \textit{a priori} different, SF models. The (a)-case gives the version of the Reisenberger state-sum model \cite{Reisenberger1997state-sum} (corresponding to a self-dual formulation), whereas the case (b) is the most prevailing and leads to the Barrett-Crane (BC)~\cite{BarrettCrane1998Euclid,BarrettCrane2000Lorentz} and the new models. The discrete connection is captured by the finite holonomies
\begin{equation}\label{eq:discrete-holonomies}
h_e[\om] \ = \ \overrightarrow{\exp} \left(\int_e \om^{AB}\mc J_{AB}\right),
\end{equation}
path-ordered along the dual edges $e$. According to the second choice, one associates to the $B$-field the bivectors~\footnote{Mention that, when the local flatness is not assumed, the integrand should be acted upon by holonomies, referring it to the single source frame, in order to ensure the correct transformation properties of the Lie algebra element under the gauge rotations (cf.~\cite{FreidelGeillerZiprick2013cont-LQG-phase-space}).}
\begin{equation}\label{eq:discrete-B}
\bigwedge^2\mathbb{R}^{3,1} \ \ni \ B^{AB}_f \ = \ \int_{S_f} B^{AB},
\end{equation}
by integrating over the co-dimension 2 cells $S_f$ of the piecewise-flat complex, which we label bijectively with the faces $f$ of the dual 2-skeleton. Together, they constitute the discretized set of (kinematical) variables of BF theory and Pleba\'{n}ski formulation of gravity. In the latter case, the simplicity constraints should be also discretized. 

Suppose, our 2-complex is dual to a triangulation. Then, depending on the relative position of triangles in a 4-simplex, the constraints fall into 3 types:
\begin{enumerate}[(i)]
\item $\displaystyle{\eps_{ABCD}^{\phantom{ABCD}}B^{AB}_fB^{CD}_f=0}$ for each triangle/face $f$ -- diagonal (or `face') simplicity; \label{diag-simplicity}
\item $\displaystyle{\eps_{ABCD}^{\phantom{ABCD}}B^{AB}_fB^{CD}_{f'}=0}$ if two faces share an edge $f\cap f'=e$ -- cross-simplicity (or `tetrahedral' constraint); \label{cross-simplicity}
\item $\displaystyle{\eps_{ABCD}^{\phantom{ABCD}}B^{AB}_fB^{CD}_{f'}=:\tilde{V}_v(f,f')}$ for any pair of faces $f,f'$ meeting at the vertex~$v$ and spanning 4-simplex volume -- the so-called volume (or `4-simplex') constraint.\label{volume-quadr}
\end{enumerate}
Each of these constraints have different status and are treated accordingly. In particular, they are implemented, respectively, at the level of faces/tetrahedra/4-simplices.

\paragraph{The closure condition.} In addition, there is usually imposed also the 3d closure
\begin{equation}\label{eq:closure-3d}
\sum_{f\supset e} B^{AB}_f \ = \ 0 \qquad \forall \, e.
\end{equation}
This is the consequence of the BF e.o.m. $\nabla^{(\om)}_{[c}B^{AB}_{ab]}=0$ in the discrete setting: integrate over the 3d volume of the flat tetrahedron $\tau_e$, putting connection to zero via the gauge transform, and use the Stokes' theorem. It reflects the gauge invariance of the BF theory and gravity. In the canonical picture, this corresponds to the Gauss law constraint (after the symplectic reduction by the cellular flatness constraint~\cite{FreidelGeillerZiprick2013cont-LQG-phase-space}, restraining local curvature on hinges), and generates the local gauge rotations. Accordingly, in the quantum theory it is usually implemented via group integration, projecting on an invariant subspace.

\paragraph{The geometric meaning.} The holonomies give the parallel transport of tensors and spinors, taking into account the relative rotation of reference frames between the path endpoints. Regarding the bivector \eqref{eq:discrete-B}, when it comes from the metric structure (i.e. the co-tetrad $e$-field, appropriately discretized), then its norm gives the area of the corresponding triangle and the tensor structure encodes the directions of surface, in locally inertial frame of reference. Strictly speaking, such $B$ should not be considered as a variable corresponding to elementary excitations, but rather has a composite nature. The simplicity constraints formulate the necessary and sufficient conditions for the system of bivectors to correspond to the faces of the discrete cell-complex (metric; in our case, of the single 4-simplex).

The 1st condition implies that the bivector is \textit{simple}, i.e. given by the wedge product of two vectors:
\begin{equation*}
\text{\eqref{diag-simplicity}} \qquad \Rightarrow \qquad B^{AB}_{f_1} \ = \  E^{[A}_2E^{B]}_3 \qquad \text{or} \qquad \star B^{AB}_{f_1} \ = \  E^{[A}_2E^{B]}_3
\end{equation*}

If two triangles share a common edge, then the sum of the corresponding two bivectors is also simple:
\begin{equation*}
\text{\eqref{cross-simplicity}} \qquad \Rightarrow \qquad B^{AB}_{f_2} \ = \  E^{[A}_3E^{B]}_1 \qquad \text{or} \qquad \star B^{AB}_{f_2} \ = \  E^{[A}_3E^{B]}_1, \qquad \text{and cyclically }\ \forall f\subset \tau_e.
\end{equation*}

The condition \eqref{eq:closure-3d} states that the geometry of the tetrahedron $\tau_e$, built on vectors $E_1,E_2,E_3$ (or its dual), has a closed boundary. This condition allows a generalization to arbitrary valence and is sufficient to uniquely specify the geometry of a flat polyhedron~\cite{BianchiDonaSpeziale2011Polyhedra}. 

An arbitrary set of ten bivectors, satisfying the above three conditions (supplemented with the orientation reversion $B_{AB}=-B_{BA}$ + some non-degeneracy requirements) forms the so-called \textit{bivector geometry}. The utility of the concept is that it allows to \textit{reconstruct} the unique flat 4-simplex (up to the orientation, translations and inversions), as is shown in~\cite{BarrettCrane1998Euclid}. This is the geometrical underpinning behind the construction of the BC model. The role of \eqref{volume-quadr} is to ensure that the geometries of the tetrahedra fit together to form consistently a 4d geometry, in particular, that the volume of a 4-simplex is invariably defined. The volume constraint~\eqref{volume-quadr} is not the part of conditions, defining the bivector geometry, because it is implied by the constraints on the tetrahedral level and the closure. The derivation goes as follows~\cite{EPRL-FK2008flipped2,EPRL-FK2008LS}. Label the five tetrahedra with $e=1,...,5$; the triangle $\triangle_{12}$ is shared by two respective tetrahedra. Using the closure \eqref{eq:closure-3d}, say for tetrahedron $1$, and contracting it with all the other bivectors, one can freely swap between triangles, for instance:
\begin{equation}\label{eq:volume-simplex}
\eps\, B(\triangle_{12})\cdot B(\triangle_{45})+\eps\, B(\triangle_{13})\cdot B(\triangle_{45})\ = \ -\eps\, B(\triangle_{14})\cdot B(\triangle_{45})-\eps \, B(\triangle_{15})\cdot B(\triangle_{45})\ = \ 0,
\end{equation}
so that the r.h.s. eliminates on the surface of the simplicity constraints \eqref{cross-simplicity} $\Rightarrow$ hence \eqref{volume-quadr} follows.

In the canonical picture parlance, \eqref{volume-quadr} is interpreted as  a ``secondary'' constraint, which ensures the dynamical conservation of the simplicity constraints \eqref{cross-simplicity} across the 4-simplex. (This is, however, not the statement of the Hamiltonian analysis of the underlying action \cite{Buffenoir-etal2004Hamiltonian-Plebanski} in the Bergmann's terminology.) Replacement of~\eqref{volume-quadr} by~\eqref{eq:closure-3d} is particularly beneficial for the quantum theory, since the linear in $B$ and local in each tetrahedron closure constraint is much more easier to deal with. By introducing auxiliary normals to tetrahedra, one could incorporate~\eqref{diag-simplicity} and \eqref{cross-simplicity} into the single `linear cross-simplicity' constraint, retaining the same geometric picture, which led to the new EPRL-FK models~\cite{EPRL-FK2008flipped2,EPRL-FK2008finiteImmirzi,EPRL-FK2008FK,EPRL-FK2008LS}. We now discuss briefly some details of this construction, as they appear in the literature.

\section{On the quantization in new models}\label{sec:quantization}
There are various ways to arrive at SF partition function (associated with the 2-complex $\Upsilon$)
\begin{equation}\label{eq:partition-Z}
Z_{\Upsilon} \ = \ \sum_{j_f,\iota_e}\prod_{f}\mc A_f\prod_{e}\mc A_e\prod_{v}\mc A_v
\end{equation}
from the classical input laid out above. Roughly they could be captured in two types:
\begin{itemize}
\item Relying on the factorization of the representation \eqref{eq:partition-Z}, it is sufficient to \textit{quantize the geometry of a 4-simplex} and then to glue such several contributions together. In particular, this route was pursued in the original derivation of the Barrett-Crane (BC) model \cite{BarrettCrane1998Euclid}.

This may be very illuminating in determining the (kinematical) state space of the model. The vertex amplitude determines the graph's local dynamics and in the canonical picture it would correspond to an expectation value of the Hamiltonian operator on the boundary spin-network state.

The drawback of the geometric approach is that it is difficult to find the right face and edge amplitudes, and ensure the gluing is consistent. 

\item A more well-founded complementary approach is based on the discretized path integral, viewed as a sum over (quantum) spacetime histories. The starting point is the BF path-integral measure:
\begin{equation}\label{eq:BF-partition-Z}
Z_{BF} \ = \ \int [dB] \ [d\om] \ e^{i \,\int\tr \, (B\wedge F[\om])} \ = \ \int [d \om]  \ \delta(F[\om]),
\end{equation}
which is well-defined in our discrete setting as
\begin{equation}\label{eq:BF-Z-discr}
Z_{BF} \ = \ \int\limits_{\mathfrak{h}^F \times H^E} \prod_f d B_f \ \prod_e dh_e \ \exp\bigg\{ i \,\sum_f \tr\,\bigg(B_f \overrightarrow{\prod_{e\subset f}}h_e\bigg)\bigg\}  \ \doteq \ \int\limits_{H^E} \prod_e dh_e \ \prod_f \delta\Bigg(\overrightarrow{\prod_{e\subset f}}h_e\Bigg).
\end{equation}
($F$ and $E$ denote the total number of faces and edges, respectively, of the 2-complex. The dot over equality sign forewarns that the second delta on the r.h.s. may appear, in general, depending on the actual group $H$ chosen.) Passing from the group elements $h_e$ to the representation category via the Plancherel theorem, one can recast \eqref{eq:BF-Z-discr} into the \eqref{eq:partition-Z} state-sum form. 
\end{itemize}

Considering the boundary and states on the induced graph $\Ga=\pa\Upsilon$, one immediately infers, quite generally, that both approaches lead to the kinematical Hilbert spaces spanned by $H$-spin networks for BF theory. The vertex amplitude is then obtained via evaluation of the boundary state on a flat connection. This picture is exact for gravity in 3d, where it is topological (and, thus, discretization independent). However, passing to 4d, the theory should be properly constrained, and this is where the various ambiguities arise. 

\paragraph{Quantizing the bivectors.} The graph $\Ga$ (cylindrical) state functional depends on connection by virtue of discrete holonomies of $H$. The bivectors act on $H$ as the right/left invariant vector fields, so that using Minkowski spacetime metric $\eta$ we can identify them with the elements of the the dual Lie algebra, carrying the natural Poisson structure: 
\begin{equation}
\begin{aligned}\label{eq:bivector-quant}
&\theta \ : & & \bigwedge^2\mathbb{R}^{3,1} && \ra && \mathfrak{so}(3,1)^\ast  && \\
&& & E_1\wedge E_2 && \mapsto && \theta(E_1\wedge E_2)(\mc B) \ := \ \eta(\mc B\rt E_1,E_2), \qquad \mc B\in \mathfrak{so}(3,1). &&
\end{aligned}
\end{equation}

For the following discussion, let us stick to convention, that the bivector $B$ refers, in general, to the symplectic structure -- namely, to the kinetic term, involving derivatives $d\om$ of the connection. It is the overall pre-factor in front of the curvature, and is promoted to tensor operator in certain representation (e.g. using techniques from geometric quantization). We also reserve the label $\Si=e\wedge e$ for the face bivector, for which the simplicity constraints are to be formulated, providing any $\Si$, satisfying them, with such an interpretation, conversely. This distinction is sensible in the light of the notorious fact that the correspondence between the two quantities is not unique, due to peculiar feature of the Hodge dual $\star: \bigwedge^2\mathbb{R}^4\ra\bigwedge^2\mathbb{R}^4$ in 4d, qualified as the Immirzi ambiguity in the quantization map:
\begin{equation}\label{eq:Immirzi-flipped-map}
B_f \ = \ \star\Si_f +\frac{1}{\ga} \Si_f \ \mapsto \ \hat{B}_f \qquad \stackrel{\ga^2\neq \pm 1}{\Longleftrightarrow} \qquad \Si_f \ \mapsto \ \left(\frac{1}{\ga}\mp\ga\right)^{-1}\left(\hat{B}_f-\ga \star \hat{B}_f\right).
\end{equation}
In other words, there are two independent invariant bilinear forms on $\mathfrak{so}(3,1)$, resulting in the Holst action, which is classically equivalent to the vacuum EC theory (on-shell), but may differ quantum mechanically.

The crucial step then is the implementation of a quantum version of the simplicity constraints at the level of state-sum for BF theory (with $\Si$ quantized as in~\eqref{eq:Immirzi-flipped-map})~\footnote{Since~\eqref{eq:Plebanski-cnstr} is invariant w.r.t. $\star$ but not $P_{\ga}=1+\frac{1}{\ga}\star$, it is somewhat perplexing that for finite $\ga$ both solution sectors (for $\Si$) lead to the Holst action for gravity with different effective parameters. This is not really the issue here, since the linear simplicity isolates the sectors in a more efficient way, irregardless of the value of $\ga$}:
\begin{equation}\label{eq:simplicity-quant}
\widehat{\mc C_\al[\Si]} \ \approx \ 0
\end{equation}
Depending on the first/second-class nature of the set \eqref{eq:simplicity-quant}, they should either annihilate the state functionals (\textit{\`{a} la} Dirac), or to be imposed weakly on matrix elements (\textit{\`{a} la} Gupta-Bleuler). This usually leads to restrictions on spin labels $j_f$ and/or intertwiners $\iota_e$ of the boundary Hilbert space states. 


\paragraph{Linear cross-simplicity.} As has been noted, the major ingredient in the new models is the \textit{linearization (partial) of the simplicity constraints}. It follows directly from the geometric meaning of conditions \eqref{diag-simplicity},\eqref{cross-simplicity}, which basically state that four triangles, belonging to the same tetrahedron $\tau_e$ and described by the area bivectors $\Sigma^{AB}_f$, lie in one hyperplane. 

In the original construction~\cite{EPRL-FK2008flipped2,EPRL-FK2008finiteImmirzi,EPRL-FK2008FK,EPRL-FK2008LS}, one associates a normal discrete 4d vector $\mc V^A_e,\ e=1,...,5$, to each of the five tetrahedra in the boundary of a 4-simplex (assume they are all timelike $\mc V_e\in \mathbb{H}^3_+\cong SL(2,\mathbb{C})/SU(2)$). The quadratic cross-simplicity~\eqref{cross-simplicity} is then replaced (rather \textit{ad hoc}) by the orthogonality requirement on bivectors: 
\begin{equation}\label{eq:linear-simplicity}
\text{(ii')}\qquad \forall f\supset e \ : \qquad \Si_f^{AB} \mc V_{Be}^{\phantom{B}} \ = \ 0 \qquad \Leftrightarrow  \qquad I_B(\mc V_e)\rt \Si_f \ = \ 0.
\end{equation}
The projector on the r.h.s. separates the ``boost'' components of $\Si$, which are co-aligned with $\mc V$:
\begin{equation}\label{eq:BR-projectors}
I_B(\mc V)^{AB,CD}\ :=\ \pm 2 \mc V^{[B}\eta^{A][C}\mc V^{D]}, \qquad I_R(\mc V)^{AB,CD}\ :=\ \eta^{A[C}\eta^{D]B} \mp 2 \mc V^{[B}\eta^{A][C}\mc V^{D]},
\end{equation}
from the ``rotational'' part, generating the conjugate $H_{\mc V} = h(\mc V)\rt SU(2)$ subgroup, which leaves $\mc V$ invariant. The gain in this new form of constraints is that it excludes the undesired $I^{\pm}$-solutions in~\eqref{eq:Plebanski-sectors}, leaving just the mix of the gravitational $II^{\pm}$-sectors, as well as degenerate one~\cite{Engle2011Plebanski-sectors} (which we do not consider here). It thus imposes stronger conditions than quadratic~\eqref{diag-simplicity} and~\eqref{cross-simplicity}, which then automatically follow.

We are making two remarks, calling the attention to normals $\mc V_e$, the prime interest of the present work. The first observation is that the symbolic notation of~\eqref{eq:simplicity-quant} is to be replaced with
\begin{equation}\label{eq:simplicity-quant+normal}
\widehat{\mc C_\al[\Si,\mc V]} \ \approx \ 0,
\end{equation}
in order to reflect the introduction of a new geometric objects, in addition to bivectors. The second comment concerns the relation between the two quantities on the constraint surface, namely: 
\begin{equation}\label{eq:linear-simplicity-solution}
\Si^{AB}_f \ = \ E^{[A}_1E^{B]}_2 \ = \ \frac{3!}{h_f^e} \star \left(\mc N^{\phantom{A}}_{f\supset e}\wedge\mc V^{\phantom{B}}_{e\phantom{f}}\right)^{AB} \qquad \text{or, conversely} \qquad  \mc V^A_e \ = \ \frac13 h_f^e \left(\star \Si^{AB}_f\right) \mc N^{\phantom{A}}_{B f\supset e} \, . 
\end{equation}
Here the edge vectors  $E_{1,2}\perp\mc N,\mc V$ of triangular face $S_f$ are orthogonal to the (spacelike, $\mc N^2=+1$) surface normal $\mc N$, lying within tetrahedron $\tau_e$: $\mc N\cdot\mc V=0$. In order for $|\mc V|^2\equiv\mc V^A\mc V_A$ to be the 3d volume (squared) of $\tau_e$, the proportionality coefficient is ought to be the height $h_f^e=(E_3\cdot\mc N)$ from the base $S_f$ to the apex. It appears from the above relations as if neither of $\Si,\mc V$ could be considered more ``fundamental'', since one can be expressed through the other and vice versa. The resolution of conundrum ultimately lies, not surprisingly, in the composite nature of quantities, both comprised of tetrad d.o.f. This standpoint will pave the way for our extension of field space in Sec.~\ref{sec:Poincare-BF}, and reformulation of~\eqref{eq:simplicity-quant+normal} in Sec.~\ref{sec:Poincare-Plebanski}. 


\paragraph{The EPRL map.} Up to this point, the Barbero-Immirzi parameter $\ga$ did not partake in the formulation of constraints and should be irrelevant for their geometric content. It, however, plays somewhat mysterious role in quantization and essential for comparison with canonical LQG theory. Let us briefly recap on the basic features of the quantum vertex amplitude which arise from the weak imposition of (the part of) constraints~\eqref{eq:linear-simplicity} on the group $H$ irreps that live on faces $f$ of the 2-complex $\Upsilon$, without delving too much into details though:

\begin{itemize}
\item The linear cross-simplicity~\eqref{eq:linear-simplicity} is imposed weakly in the gauge-fixed setting, i.e. for the standard normals -- either $\mc V^A_0=\de^A_0$ for spacelike (or $\mc V^A_3=\de^A_3$ for tetrahedra of mixed signature), characterizing the canonical embedding of $H_0=SU(2)$ (or $H_3=SU(1,1)$) into $H$. All the various techniques (such as vanishing matrix elements $\sim$ master constraint $\sim$ restriction of coherent state basis to those with the simple expectation values in the semi-classical limit) lead to the relation between 4d and 3d Casimirs:
\begin{equation}\label{eq:EPRL-cross-simpl}
\frac{C_H^{(2)}(\chi_f)}{2C_{\mc V}(j_{f\supset e})} \ \simeq \ \ga.
\end{equation}
This defines the embedding map for `spins' $j$ into decomposition of $SL(2,\mathbb{C})$ irreps $\chi_f$ w.r.t. little group. 

One nice feature of \eqref{eq:EPRL-cross-simpl} is that its exact implementation \cite{Alexandrov2010newSF-from-CovariantSU2} projects the spin-connection $\om$ in the holonomies~\eqref{eq:discrete-holonomies} to the (covariant lift of) Ashtekar-Barbero connection of LQG:
\begin{equation}\label{eq:projected-AB}
\pi^{(j)}\left(\om^{AB}_a \mc J^{(\chi)}_{AB} \right)\pi^{(j)} \ = \ {}^{(\ga)}A_a^I L_I^{(j)}, \qquad {}^{(\ga)}A^I \ = \ \frac{1}{2}\eps^{0I}_{\phantom{0I}JK}\om^{JK} - \ga \om^{0I},
\end{equation}
here $\pi^{(j)}$ projects on the $j$-irrep of the $SU(2)$ subgroup, and $L^I=\frac12\eps^{0IJK}\mc J_{JK}$ is the canonical generator of rotations in the corresponding representation.

\item The part of the linear simplicity~\eqref{eq:linear-simplicity} is first-class and imposed strongly. Taking into account~\eqref{eq:EPRL-cross-simpl}, it is equivalent to the (quadratic) diagonal simplicity~\eqref{diag-simplicity}, if the Barbero-Immirzi parameter is included. It relates the $SL(2,\mathbb{C})$ Casimirs:
\begin{equation}\label{eq:EPRL-diag-simpl}
\qquad \left(1 \pm\ga^2\right) C_H^{(2)}(\chi_f)-2\ga C_H^{(1)}(\chi_f) \ \simeq \ 0, 
\end{equation}
and puts restrictions on allowed `simple' irreps $\chi_f$.

\item The closure condition~\eqref{eq:closure-3d} translates into the requirement of the $H$-invariance of the amplitude and is ordinarily implemented through the group integration. It obviously encompasses the invariance w.r.t. the little group $H_0$ of the embedded $j$-states within the tensor product of simple representations, stacked at the tetrahedron $\tau_e$ bounded by the faces $S_f$. Thereby the EPRL embedding map is established:
\begin{equation}\label{eq:EPRL-map}
\Phi_\ga \ : \quad \mathrm{Inv}_{SU(2)}^{\phantom{1}} \,  \bigotimes_{f\supset e} \, j_f \ \longrightarrow \ \mathrm{Inv}_{SL(2,\mathbb{C})}^{\phantom{1}}  \, \bigotimes_{f\supset e}\, \chi_f,
\end{equation}
where we denoted the representation spaces with their corresponding labellings, for brevity. Hence the states in the boundary space are labelled by $SU(2)$ intertwiners glued into spin-networks. The last portion of the integration over the homogeneous space $H/H_0$ can be vied as summing over all possible gauge choices for the normals $\mc V \in H \rt \mc V_0$, and thus restoring the full Lorentz invariance at the vertex in the gauge-fixed model.
\end{itemize}

One clearly sees the subsidiary role of $\mc V$'s: in the construction of the model they are treated as ``unphysical'' gauge choice, which one can specify freely, and later ``erase'' this information. In effect, $\mc V$ allows one to reduce the problem of constraint imposition to the level of little group $H_0$, instead of operating directly on the covariant level of the full Lorentz group $H$. However, let us pinpoint some delicate issues, regarding these normals: 

\begin{itemize}

\item For instance, we know that the relative of the time-normal field explicitly appears as non-trivial lapse/shift components in the Lorentz-covariant canonical quantization of the 1st order action~\eqref{eq:Einstein-Cartan} with the Holst term. It is also an established fact that the boundary states are spanned by the projected spin networks~\cite{LivineDupuis2010lifting-SU2}:
\begin{equation}\label{eq:projected-spin-network}
\Psi_{\Ga=\pa\Upsilon}^{\phantom{\Ga}}\big(\big\{h_l\big\},\big\{\mc V_n\big\}\big)  \ = \ \left\langle \bigotimes_{l\in\pa f} \left( \pi^{(j_{t(l)})}_{\,\mc V} \ D^{\chi_l}_H (h_l) \ \pi^{(j_{s(l)})}_{\,\mc V}\right), \ \bigotimes_{n\in\pa e} \iota^{(n)}_{\mc V}\right\rangle,
\end{equation}
where these normals play a prominent role and are discretized naturally over the nodes. The state functionals are invariant w.r.t. the covariant Lorentz group action on both sets of variables:
\begin{equation}\label{eq:projected-transformation}
\Psi \big(\big\{h_l\big\},\big\{\mc V_n\big\}\big) \ = \ \Psi \big(\big\{U_{t(l)}^{-1}h_l^{\phantom{1}}U_{s(l)}^{\phantom{1}}\big\},\big\{U_n\rt\mc V_n\big\}\big), \qquad \forall \, U_n \in H.
\end{equation}

\item Historically, one of the incentives, which led to FK model~\cite{EPRL-FK2008FK}, was to solve the so-called ``ultra-locality'' problem with the BC amplitude -- namely, the apparent shortage in intertwiner d.o.f., which signified about the limited nature of correlations between neighbouring 4-simplices' geometries. 
On a more technical level, the resolution of identity, associated with the invariant vector space $X_e:=\mathrm{Inv}_H^{\phantom{1}}\big[\bigotimes_{f\supset e} \chi_f\big]$ at each edge of initial BF spin foam, rewritten in terms of coherent intertwiners (for a moment, $H=\mathrm{Spin}(4)\cong SU(2)\otimes SU(2)$):
\begin{equation}\label{eq:identity-resolution}
\mathbbm{1}_{X_e}^{\phantom{1}} \ = \ \bigotimes_\pm\int \prod_{f\supset e} d^2\mathbf{n}^\pm_{ef} \, d_{j^\pm_f} \int dh_{ve}^\pm\int dh_{v'e}^\pm \ h_{ve}^\pm\big|j^\pm_f,\mathbf{n}^\pm_{ef}\big\ket\big\bra j^\pm_f,\mathbf{n}^\pm_{ef}\big|\left(h_{v'e}^\pm\right)^\dagger ,
\end{equation}
is replaced by a projector, where summation is only over those states in the `simple' representations $j^+=j^-$, which solve the quantum cross-simplicity~\eqref{eq:linear-simplicity} (as expectation values). Specifically, the existence of a common $\check{u}_e\in SU(2)$ group element is inferred, representing 4d normal $\mc V_e$, which establishes the relation $\mathbf{n}^-=-\check{u}_e\rt\mathbf{n}^+$. 

The gluing of two 4-simplices -- via identifying first the geometries, corresponding to their common tetrahedron~$\tau_e$, and only then performing an integration -- takes into account the missing correlations between neighbouring vertices, sharing an edge. Whereas the unique Barrett-Crane intertwiner is obtained if one integrates separately at each vertex over (then decoupled) geometries. Arguably, the latter identification concerned only an internal 3d geometry of $\tau_e$, encoded in the spins and 3d normals $\{j,\mathbf{n}\}$, corresponding to the (canonically embedded) little group $H_0=SU(2)$.

The key point of the present work to treat normal $\mc{V}$ as truly independent geometric variable, characterizing the placement of 3d faces in 4d, creates some tension with the implementation of gauge invariance in the EPRL-FK vertex amplitude, if the above logic is extended by analogy to $\mc V$. Indeed, the dependence on the subsidiary variable $\check{u}_e$ is ``eaten'' by the follow-up $H$-group integration, performed independently at each vertex. The situation is quite similar to the BC intertwiner, so there still may be some d.o.f. left uncorrelated (even though if gauge).
 
\item A similar type of arguments have been put forward on the basis of the Lorentz-covariant canonical quantization endeavour~\cite{Alexandrov2008SF-from-CovariantLQG}. It has been argued that allowing an additional variable $\mc V$ remain unintegrated, the covariant transformation properties~\eqref{eq:projected-transformation} necessitate a relaxation of the closure condition~\cite{Alexandrov2008SF-cnstr-revisited}. The closure of the discrete bivectors is a too restrictive Gauss law, because the gauge transformations should act on the vector variables as well. In the preliminary Hamiltonian analysis of~\cite{GielenOriti2010Plebanski-linear}, the very same reason led authors to artificially enlarge the phase space by the fictitious momenta, corresponding to $\mc V$. In the Sec.~\ref{sec:Poincare-BF} we will see how our modification responds to both these objectives in quite a natural manner.


\end{itemize}

\subsection{The fate of the `volume' constraint}\label{ssec:volume-fate}
As discussed in Sec.~\ref{sec:classics}, the discretization of (quadratic) volume constraint employs several tetrahedra of the 4-simplex, hence it is usually thought of as consistency condition on time evolution (``secondary'' constraint). Indeed, \eqref{eq:volume-simplex} shows that if the cross-simplicity together with the 3d closure holds true for all tetrahedra, it does not matter which of the face bivectors are used to calculate the volume of the 4-simplex. Thus, it is not imposed explicitly in the quantum theory, once the former two are implemented. The same proof using the cable-wire diagrammatic representation of the 4-simplex amplitude shows that this indeed holds in  the quantum theory as well~\cite{Perez2013SF-review}, at least semi-classically.

We notice that the argument heavily relies on the combinatorics of the 4-simplex and does not necessarily extend to the generic case of arbitrary 2-complex. Explicitly, this appears already in the simple case of (hyper)cuboidal graph~\cite{BahrSteinhaus2016Cuboidal-EPRL}. There the plain ansatz, using the semi-classical substitute for the exact vertex amplitude, has been studied for the flat (no curvature) rectangular lattice. The expression for the amplitude is a straightforward KKL generalization of the Euclidean $\text{EPRL}_{\ga<1}$ model, in the FK representation using coherent states:
\begin{subequations}
\begin{gather}\label{eqs:cub-amplitude}
\mc A_v^\pm \ = \ \int \prod_e dh_{ve} \, e^{S^\pm[h_{ve}]} \ \sim \ \sqrt{\frac{(2\pi)^{21}}{\mathrm{det}\left(-\pa^2 S\big(\vec{h}_{\mathrm{c}}\big)\right)}} e^{S(\vec{h}_{\mathrm{c}})}, \quad j\ra \infty ,
\\
S^\pm[h_{ve}] \ = \ \frac{1\pm\ga}{2}\sum_{(ee')} 2j_{(ee')} \ln\bra-\mathbf{n}_{ee'}|h_{ve}^{-1}h_{ve'}^{\phantom1}|\mathbf{n}_{e'e}\ket.
\end{gather}
\end{subequations}
Here the summation goes over the (ordered) pairs $(ee')=(f\cap\pa\mc T_v)$ -- the (directed) links of a 6-valent combinatorial hypercuboidal boundary graph, and the data $\{j,\mathbf{n}\}$ in this symmetry reduced setting was chosen to represent (semi-classically) $\mathbb{R}^3$-cuboids: 
\begin{equation}\label{eq:cuboid-intertwiner}
|\iota\ket \ = \ \int d\check{u} \ \check{u}\rt\bigotimes_{i=1}^3|j_i,\mathbf{n}_i\ket|j_i',\mathbf{n}'_i\ket, \qquad j_i' \, \mathbf{n}_i' = - j_i \, \mathbf{n}_i,
\end{equation}
glued along their faces. $\pa^2 S$ denotes the Hessian matrix, evaluated at the critical point $\pa S\big(\vec{h}_{\mathrm{c}}\big)=\Re\, S\big(\vec{h}_{\mathrm{c}}\big)=0$.

\begin{figure}[!ht]
\center{\includegraphics[width=0.5\linewidth]{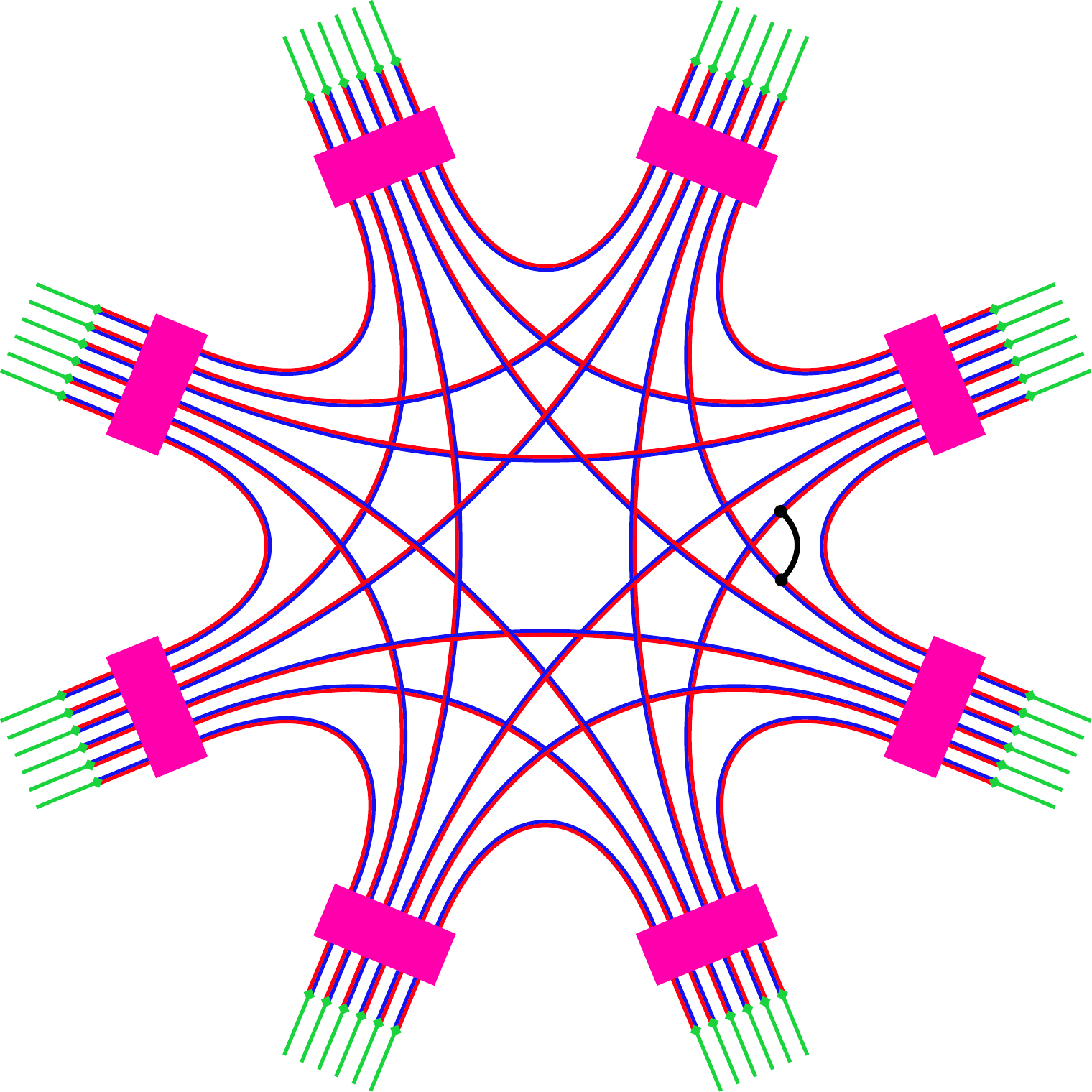}}
\caption{Diagrammatic representation of the hypercuboidal amplitude. The notation is as in~\cite{Perez2013SF-review}: lines (`wires') depict the $SL(2,\mathbb{C})$ representation matrices, and boxes (`cables') -- invariant projectors (group integrations). Note: apart from the graph's purely combinatorial properties, the relevance of the crossings for the 4-volume is highlighted in~\cite{BahrBelov2017VolumeSimplicity}.}
\label{fig:cable-wire-cub}
\end{figure}
It turns out that the 4-volume of a flat hypercuboid cannot be unambiguously ascribed to a vertex, using the prescription akin to~\eqref{volume-quadr} for 4-simplex, where its consistency is guaranteed by~\eqref{eq:volume-simplex}. If the rectangular lattice is \textit{geometric} (i.e. we are on the solution to simplicity constraints), it is characterized entirely in terms of its four edge lengths $E_i,\, i=t,x,y,z$, and the unique geometric 4-volume can be computed irregardless of the faces chosen $\tilde{V}_v:=E_tE_xE_yE_z=\Si_{tx}\Si_{yz}=\Si_{xy}\Si_{zt}=\Si_{xz}\Si_{yt}$, where each area is simply given by the product of the cooresponding edge lengths, e.g. $\Si_{xy}=E_xE_y$, etc. Instead, we get six arbitrary areas/spins~$j_{(ij)}$ which do not necessarily satisfy the above conditions. Indeed, if we try to proceed like in~\eqref{eq:volume-simplex}, starting with the expression $j_{xy}j_{zt}$ (depicted by a `grasping' on Fig.~\ref{fig:cable-wire-cub}) and applying the 3d closure for the spatial cuboid $\tau_t$, we end up with a tautological result: the contributions from parallel faces (bounding $\tau_t$ and the two adjacent anti-podal cuboids $\tau_i$, $\tau_{-i}$) enter with equal areas/spins but opposite signs $\eps\,\Si_{ij}\cdot\Si_{kt}=-\eps\,\Si_{ji}\cdot\Si_{kt}, \, i,j,k=x,y,z$, thus contracting each other in the sum~\footnote{Stronger, $\eps\,\Si_{xz}\cdot\Si_{zt}=\eps\,\Si_{yz}\cdot\Si_{zt}=0$ by the cross-simplicity.}, so we arrive at the dull equality $j_{xy}j_{zt}=j_{yx}j_{zt}$.

The essential ingredient of the EPRL construction, namely, that one could effectively replace the `volume' part of the simplicity by the 3d closure, is not valid for a higher valence. We encounter the problem that the model \emph{is not constrained enough} to complete the reduction from BF to gravitational theory. The measure of deviation is captured by the `non-geometricity' parameter, in this case:
\begin{equation}\label{eq:non-geometricity}
\varsigma \ = \ \begin{pmatrix}
           j_{xy}j_{zt}-j_{xz}j_{yt} \\
           j_{xz}j_{yt}-j_{xt}j_{yz} \\
           j_{xy}j_{zt}-j_{xt}j_{yz}
         \end{pmatrix}.
\end{equation}
The numerical studies of~\cite{BahrSteinhaus2016Cuboidal-EPRL} show that the non-geometric configurations with $\varsigma\neq0$ \textit{do} generically contribute to the path-integral, although their impact might be exponentially suppressed. The dumping is controlled by the width of the Gaussian -- the effective ``mass'' term $m^2_\varsigma(\al) \approx 2\al -1>0$ for $\al\gtrsim 0.5$, which depends crucially on the parameter $\al$ in the choice of the face amplitude $\mc A_f^{(\al)} = \big((2j_f^++1)(2j_f^-+1)\big)^\al$. Reassuringly, in the same range of~$\al$ indications were given for the tentative continuum limit in the form of a phase transition, with the restoration of the (remnant) diff-invariance. This led authors to suggest that the allowed freedom in the face amplitude might be restricted on physical grounds, for one should definitely obtain geometric states in the classical limit.

\paragraph{Fully linear treatment.} Naturally, the 2 missing constraints to impose in this elucidating example are $\varsigma=0$, however, it is unevident how to proceed in the most general case. The simplicial `4-volume constraint' makes little sense here, unless appended with some additional requirements (as we tentatively propose in~\cite{BahrBelov2017VolumeSimplicity}, based on the certain type of graph invariants). Being the part of Pleba\'{n}ski's \emph{quadratic} formulation, it is also inorganic to the model built on linear constraints. An alternative fully \textit{linear} formulation was put forward in~\cite{GielenOriti2010Plebanski-linear}, providing both the continuum version of the cross-simplicity~\eqref{eq:linear-simplicity}, as well as the linearized counterpart for the `volume' constraints of the form~\eqref{eq:simplicity-quant+normal}. It introduces the basis of 3-forms $\vartheta^A$, whose discretization naturally associates 4d normal vectors
\begin{equation}\label{eq:discrete-normals}
\mc V^A_e \ = \ \int_{\tau_e} \vartheta^A
\end{equation}
to tetrahedra~$\tau_e$. Although, strictly speaking, neither Pleba\'{n}ski, nor Gielen-Oriti's linear version were ever formulated beyond triangulations, let us extrapolate the latter to our cuboidal setting, like we did with 4-volume. 

\begin{figure}[!t]
\center{\includegraphics[width=0.7\linewidth]{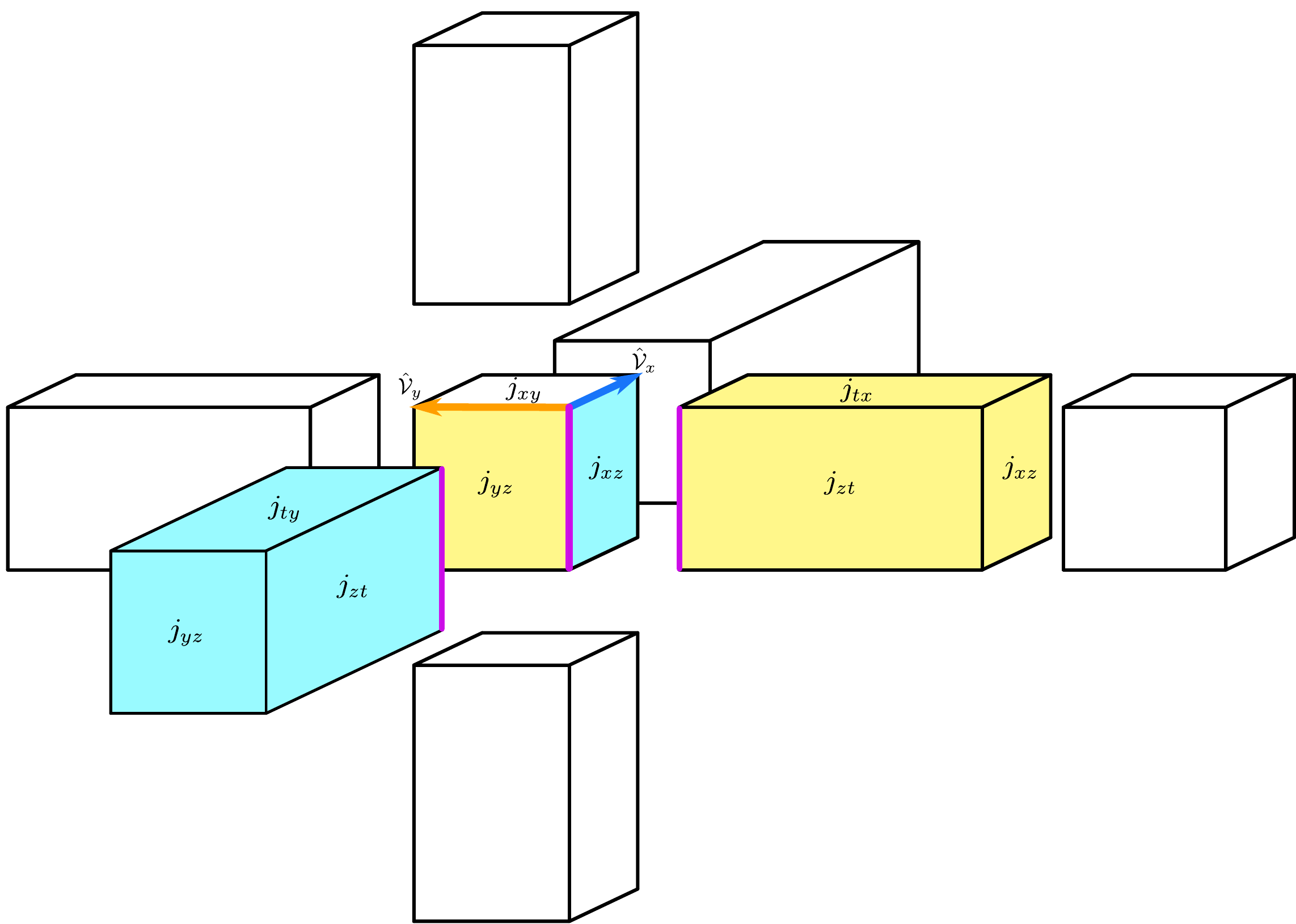}}
\caption{Elements of linear volume constraint: edge $E_z$ (purple) is shared by the two faces of the spatial cuboid in the middle, bounding it from the cuboids, orthogonal to $x$ (blue) and $y$ (yellow) directions, respectively. Colour scheme corresponds to $\Si\cdot\mc V$ pairing.}
\label{fig:hypercube}
\end{figure}

Analogously to 4-simplex, every vertex is the source of four edge vectors, each of which is shared by exactly three cuboids, intersecting along three faces, respectively. Their directions can be identified with that of the eight cuboidal normals $\hat{\mc V}_i$ (e.g. aligned with $\hat{x}_i$ and $\hat{x}_i' = -\hat{x}_i$ of standard cartesian grid). However, their norms are considered as free parameters (like surface areas $j$). The bivector data~\footnote{We exclude the Immirzi parameter from consideration and put $B=\star\Si$. From here on all the following discussion is purely classical.} is restricted to satisfy
\begin{equation}\label{eq:hypercube-bivectors}
(B_{ij}+B_{i'j})\cdot\hat{\mc V}_j \ = \ 0,
\end{equation}
that is opposite faces of 3d cuboid are required to be parallel and of equal area. Indeed, this is precisely the boundary data for coherent intertwiners~\eqref{eq:cuboid-intertwiner}, if 3d normals are defined as $(j\hat{\mc N})_i= B_{ij}\cdot\hat{\mc V}_j$. Pick up some vertex, and fix the edge $\hat{\mc V}_l$, ejecting from it. For every such pair, via straightforward transfer from~\cite{GielenOriti2010Plebanski-linear}, one writes down the two linearised volume (vector) constraints:
\begin{equation}\label{eq:linear-volume-simplicity}
\mc V_i\cdot \Si_{il} \ = \ \mc V_j\cdot \Si_{jl} \ = \ \mc V_k\cdot \Si_{kl} \qquad \forall \ i,j,k\neq l.
\end{equation}
We have one equality per each 3d cuboid in the cycle, sharing the given edge as a hinge, at which the two consecutive 2d faces intersect. In other words, the volume constraint is discretized at the edge-cuboid pairing (see figure~\ref{fig:hypercube}). Because of orthogonality, all the terms are proportional to $\hat{\mc V}_l$. Let us regroup it at each vertex via formation of sums over edges, orthogonal to one particular direction $\hat{\mc V}_k$:
\begin{equation}\label{eq:reduced-simplicity}
\sum_{\{i,j\}\neq k} \mc V_{i}\cdot (\star B_{jk}) \ = \ 0 \qquad \forall \, k.
\end{equation}
(Writing $\star B_{zt}=\Si_{xy}$, etc., is just the convenient relabelling of face by its two orthogonal directions, corresponding to cuboids adjacent along $S_f$.) Each individual term in the sum is proportional to $\hat{\mc V}_{l\neq k}$, and by linear independence in 3d subspace $\perp\hat{\mc V}_k$ we can rewrite pre-factors in the form of proportion:
\begin{equation}\label{eq:proportion}
\frac{|\mc V_i|}{|\mc V_j|} \ = \ \frac{|B_{ik}|}{|B_{jk}|} \qquad \forall \, i,j\neq k,
\end{equation}
the third equality being the consequence of the other two. Taking another such identity for $l\neq k$ and the same $i,j$ we arrive at $|B_{ik}||B_{jl}|=|B_{il}||B_{jk}|$ -- the familiar geometricity conditions. 

Our proof, however, showed something more than that. The volumes $|\mc V_i|$ of three cuboids in the cycle around selected edge $\hat{\mc V}_k$ relate as areas of their bases $|B_{ik}|$. The proportionality coefficient $|E_k| = |\mc V_i|/|B_{ik}|$ is independent of $i$ and we infer the existence of heights/edge lengths, invariantly defined in terms of (independent) $|\mc V_i|, |B_{ik}|$ from any of three cuboids, sharing this edge. The shapes of faces are obviously matching. We essentially exploited self-reciprocal nature of orthogonal lattice, allowing an identification of edges with $\mc V_i$ up to constants. Had we allowed fluctuations in directions $\hat{\mc V}_i$, the two notions would separate, and multiple angles between the lattice and its dual would intervene the formulas.

\paragraph{Closure of 4d normals.} If we agree to call ``non-local'' or ``dynamical'' either the quantities or relations, where objects from several elements of 3d boundary are involved, then the constraint~\eqref{eq:reduced-simplicity} is necessarily non-local. Indeed, it collects volumes of three cuboids at their intersection edge, multiplied by the areas that they cut at the beginning of the edge. Gielen and Oriti demonstrated in the case of the 4-simplex that these non-trivial equations may be replaced by another non-local condition of \emph{4d closure} of normals: 
\begin{equation}\label{eq:closure-4d}
\sum_{e\supset v} \mc V^A_e \ = \ 0 \qquad \forall \, v,
\end{equation}
together with the usual cross-simplicity~\eqref{eq:linear-simplicity} and 3d closure of bivectors~\eqref{eq:closure-3d} at each tetrahedron $\tau_e$ (i.e. ``local'', or ``kinematical'' constraints). 

This result reminds of the \textit{Minkowski's theorem}, extended to 4d. Recall that the latter asserts the existence of the unique (up to congruence) flat convex polytope, if the set of its face normals $\{\mc V_e\}$ is given, satisfying~\eqref{eq:closure-4d} (cf.~\cite{BianchiDonaSpeziale2011Polyhedra} for the 3d case). The non-trivial part is to demonstrate that it is compatible with characterization via simplicity constraints. We did not find the way to prove the analogous statement for hypercuboid. The complication, arising in this case, we relate to the presence of well-separated boundary elements (e.g. ``past'' and ``future'' cuboids in the foliation picture). Whereas in 4-simplex, every tetrahedron shares at least one common point with any other.


Returning to constrained-BF framework of Spin Foam models, our inspection highlights the following fact. In order to complete the reduction from the topological theory to gravity, using the linear formulation~\eqref{eq:simplicity-quant+normal} with independent normals, some additional requirements have to be met. This may be either `linear volume' constraint, or its equivalent. The obvious rewriting of the norms $|\mc V_l|$, using~\eqref{eq:proportion}, as
\begin{equation}\label{eq:volume-unique}
|E_i| |B_{il}| \ = \ |E_j| |B_{jl}| \ = \ |E_k| |B_{kl}| \qquad \forall \, i,j,k\neq l,
\end{equation}
leads to useful (re)interpretation of constraints~\eqref{eq:linear-volume-simplicity}. Namely, they indirectly characterize edge lengths through compatibility both with $B$'s \emph{and} $\mc V$'s, such that the unique (metric) volume, equal to $|\mc V_l|$, can be ascribed to each cuboid. After this identification the normals have to satisfy closure~\eqref{eq:closure-4d}.

We find it convenient to switch to a simpler variables, leading to a more manageable set of relations. Instead of volumes $\mc V$, we therefore propose to use directly the edge vectors $E$, as suggested by~\eqref{eq:volume-unique}. Unlike the original Gielen-Oriti's~\eqref{eq:linear-volume-simplicity}, the new version can be stated locally at the level of 3d polyhedra. We study the continuum formulation behind it in Sec.~\ref{sec:Poincare-Plebanski}, proving the equivalence with the usual simplicity constraints $B=\star e\wedge e$.

As another option, one could prefer to impose the closure on 4d normals $\mc V$, at least in the case of 4-simplex. Ordinarily, the 3d closure~\eqref{eq:closure-3d} of bivectors can be thought of as the result of integration of the continuous Gauss law $\nabla^{(\om)}_{[c}\Si^{AB}_{ab]}=0$ over a 3-ball triangulation (using the Stokes' theorem and the flatness of connection).
There is, however, the problem with such field-theoretic interpretation of~\eqref{eq:closure-4d}, because the corresponding dynamical law $\nabla^{(\om)}_{[a}\vartheta^A_{bcd]}=0$ for independent 3-forms is absent in the continuum theory. This flaw is fixed in the Sec.~\ref{sec:Poincare-BF}. In accord with our choice of variables, we first establish the link with the condition of vanishing torsion for tetrads $e$, and then implement it via Lagrange multipliers, prior to any imposition of simplicity constraints.


\section{The extended BF action with the frame field}\label{sec:Poincare-BF}

The 3d closure condition~\eqref{eq:closure-3d} represents the discrete Gauss law of the BF theory. We are looking for a way to accommodate the discrete 4d closure~\eqref{eq:closure-4d} in a similar fashion within a continuum theory through the equations of motion. We restrict ourselves here with the modification of the unconstrained topological theory, postponing the discussion of the simplicity constraints to the next section.

Recall that the 3-forms appearing in~\eqref{eq:discrete-normals} are eventually related to tetrads:
\begin{align}\label{eq:3-forms}
\vartheta^A \ & = \ \frac{1}{3!}\vartheta^A_{bcd}\, dx^b\wedge dx^c\wedge dx^d \nonumber \\ 
& = \ \frac{1}{3!}\eps^A_{\phantom{A}BCD}\, e^B_b e^C_c e^D_d \, dx^b\wedge dx^c\wedge dx^d \ = \ \frac{1}{3!}\eps^A_{\phantom{A}BCD}\, e^B\wedge e^C\wedge e^D.
\end{align}

By the Hodge duality (see~\ref{app1}), we may prefer an alternative parametrization and use directly the tetrad field  $e^A_a$, without any loss of generality -- the number of components is the same as $\vartheta^A_{bcd}$. In order to identify the appropriate kinetic term, suppose that a piecewise-flat cell complex $\bigcup_v \mc T_v\simeq \mc M$, approximating the target spacetime manifold, is given. Express the l.h.s. of~\eqref{eq:closure-4d} through Stokes' theorem:
\begin{equation}\label{eq:Stokes}
\sum_{e\supset v} \mc V^A_e \ = \ \int_{\pa\mc T_v} \vartheta^A \ = \ \int_{\mc T_v} d\vartheta^A, \qquad \text{where} \quad \pa\mc T_v = \bigcup_{e\supset v}\tau_e
\end{equation}
-- the boundary of the 4-simplex $\mc T_v$ dual to the vertex $v$ (more generally, any 4-polyhedron). One possibility is to introduce it with the Lagrange multipliers (and covariant derivative for the arbitrary smooth manifold):
\begin{gather}
\int \mathpzc{b}_A^{\phantom{A}} \, D \vartheta^A \ = \ \int \frac12 \mathpzc{b}_A^{\phantom{A}}\eps^A_{\phantom{A}BCD}\, e^B\wedge e^C\wedge D e^D \ \equiv \ -\int \be_A^{\phantom{A}}\wedge T^A, \nonumber\\
\text{where} \quad \be_A^{\phantom{A}} \ := \ \frac12 \eps_{ABCD}^{\phantom{ABCD}} \, \mathpzc{b}^B e^C\wedge e^D \ \equiv \ \frac12 \be_{Abc}^{\phantom{A}}\, dx^b\wedge dx^c, \quad \be^A_{cd}\ =\ \eps^{AB}_{\phantom{AB}CD}\, \mathpzc{b}_B^{\phantom{B}}e^C_c e^D_d \label{eq:T-multiplier}
\end{gather}
-- the 2-form taking values in $\mathbb{R}^{3,1}$, whilst $\mathpzc{b}_A$ are scalar functions (0-forms). The vanishing of torsion $T^A:=De^A$ in the first line is sufficient for $D\vartheta^A=0$ (4d closure), and we thus further choose to take generic~$\be_A$ as an independent Lagrange multipliers (conjugate to $e^A$), imposing the stronger condition $De^A=0$~\footnote{It can be noted that an $n$-form with vanishing exterior derivative and appropriate internal indices can be given a geometric interpretation as describing $n$-simplices closing up to form an $(n+1)$-simplex. $De^A=0$ may be seen as another instance where the edges of the 2d triangle add up to zero.}. There is still a possibility left for exploration if we had chosen independent $\mathpzc{b}_A$ and weaker equation $D\vartheta^A=0$~\footnote{In this regard, it would be interesting to compare the two constraints and to identify the sets of compatible geometric configurations in each case (especially in the light of allowed `conformal shape-mismatch' in the recent work~\cite{Dona-etal2017KKL-asumpt}).}.

Before studying gravity per se, let us examine first the simplifying theory, which one gets by modifying correspondingly the pure BF action. Thereby, we suggest to consider the following unconstrained theory:
\begin{equation}\label{eq:BF-Poincare}
S_{0}[B,\be,\om,e] \ = \ \int B_{AB}^{\phantom{AB}}\wedge F^{AB} + \be_A^{\phantom{A}}\wedge T^A.
\end{equation}
The main assumption made about the frame field is that it is non-degenerate, and the matrix $e^A_a$ is invertible. In the form of the action we may recognize the special case of the theory~\cite{MontesinosCuesta2008BF-Cartan-Ndim,MontesinosCuesta2007BF-Cartan}, designed to obey the Cartan's (also Bianchi's) structure equations (which are the kinematical basis for Riemannian geometry). One could anticipate and announce~\eqref{eq:BF-Poincare} as the Poincar\'{e} BF theory -- this assertion is made precise, as we develop in this section the structure of the theory in full detail. 

\subsection{Gauge symmetries: the Lagrangian approach}

The theories in physics are defined by their symmetries, which are usually presupposed. To find them out from the given action principle we follow the route, converse to the renowned Noether's 2nd theorem. Namely, the gauge symmetries (by which we understand the dependence of the dynamics on an arbitrary functions) manifest themselves through the differential identities among equations of motion. Calculate the Euler-Lagrange derivatives (putting the results into convenient language of forms):

\begin{subequations}\label{eq:BF-Poincare-variations}
\begin{align}
&& \frac{\de S_0}{\de B_{AB}}&&& := && \frac{1}{2!}\eps_{abcd} \frac{\de S_0}{\de B_{ABab}} dx^c \wedge dx^d &&= && F^{AB} , && \\
&& \frac{\de S_0}{\de \be_A} &&& := && \frac{1}{2!}\eps_{abcd} \frac{\de S_0}{\de \be_{Aab}} dx^c \wedge dx^d &&= && T^A , && \\
&& \frac{\de S_0}{\de \om_{AB}} &&& := && \frac{1}{3!}\eps_{abcd} \frac{\de S_0}{\de \om_{aAB}} dx^b \wedge dx^c \wedge dx^d &&= && D B^{AB} - e^{[A}\wedge \be^{B]} , && \label{eq:variation-omega}\\
&& \frac{\de S_0}{\de e_A} &&& := && \frac{1}{3!}\eps_{abcd} \frac{\de S_0}{\de e_{aA}} dx^b \wedge dx^c \wedge dx^d &&= && D \be^A. &&
\end{align}
\end{subequations}

Setting variations to zero (with some fixed boundary conditions), we obtain the field equations, to which every physical motion must satisfy:

\begin{equation}\label{eq:BF-Poincare-e.o.m.}
\frac{\de S_0}{\de B}, \frac{\de S_0}{\de \be}, \frac{\de S_0}{\de \om}, \frac{\de S_0}{\de e} \ = \ 0.
\end{equation}
The torsion now manifestly vanishes (as well as the curvature), and one recognizes in the third line~\eqref{eq:variation-omega} the generalized covariant conservation of $B$, i.e. the lifted 3d closure of bivectors in the discrete (as promised earlier).

It is now straightforward to derive the corresponding differential relations between functional derivatives~\eqref{eq:BF-Poincare-variations}: 
\begin{subequations}\label{eq:differential-identities}
\begin{align}
&&& \mc I_{AB}^{(1)}  && := &&  D \frac{\de S_0}{\de B^{AB}} \ = \ 0 , &&& \\
&&& \mc I_A^{(2)}  && := &&  D \frac{\de S_0}{\de \be^A} - e^B \wedge \frac{\de S_0}{\de B^{AB}} \ = \ 0 , &&& \\
&&& \mc I_{AB}^{(3)}  && :=  && D \frac{\de S_0}{\de \om^{AB}} - 2 B_{[A}^{\phantom{[B}C}\wedge\frac{\de S_0}{\de B^{B]C}}  - \be_{[A}^{\phantom{[}}\wedge \frac{\de S_0}{\de \be^{B]}} - e_{[A}^{\phantom{[}}\wedge \frac{\de S_0}{\de e^{B]}} \ = \ 0 , &&& \\
&&& \mc I_A^{(4)}  && := &&  D \frac{\de S_0}{\de e^A} - \be^B\wedge\frac{\de S_0}{\de B^{AB}} \ = \ 0 , &&&
\end{align}
\end{subequations}
which are vanishing identically (off-shell), without using the equations of motion (only their form). To obtain the above relations we used the 2nd and 1st Bianchi's identities, i.e. the commutation of covariant derivatives:
\begin{equation*}
DF^{AB}  =  0, \qquad DDe^A  =  F^A_{\phantom{A}B}\wedge e^B, \qquad DD\be^A  =  F^A_{\phantom{A}B}\wedge \be^B, \qquad DDB^{AB}  =  F^A_{\phantom{A}C}\wedge B^{CB} + F^B_{\phantom{B}C}\wedge B^{AC}.
\end{equation*} 
From the complete set of independent differential identities~\eqref{eq:differential-identities} we now form a generic linear combination, which is identically equal to zero, and integrate by parts:
\begin{gather*}
\int \Xi^{AB}\wedge \mc I_{AB}^{(1)} + \xi^A\wedge \mc I_A^{(2)} + \mathpzc{U}^{AB}\mc I_{AB}^{(3)} + \mathpzc{u}^A\mc I_A^{(4)} \\
= \int \left(D\Xi^{AB} - \xi^A\wedge e^B + \mathpzc{U}^A_{\phantom{A}C}B^{CB}+\mathpzc{U}^B_{\phantom{B}C}B^{AC} - \mathpzc{u}^A\be^B\right)\wedge \frac{\de S_0}{\de B^{AB}} \\
+\left(D\xi^A+\mathpzc{U}^A_{\phantom{A}B}\be^B\right)\wedge \frac{\de S_0}{\de \be^A} - D\mathpzc{U}^{AB}\wedge \frac{\de S_0}{\de \om^{AB}} - \left(D\mathpzc{u}^A -\mathpzc{U}^A_{\phantom{A}B}e^B\right)\wedge \frac{\de S_0}{\de e^A},
\end{gather*}
for some arbitrary coefficient functions $(\mathpzc{U}^{AB},\mathpzc{u}^C)$ and 1-forms $(\Xi^{AB},\xi^C)$. The transformations that leave the action invariant, up to divergence, are readily seen:
\begin{subequations}\label{eq:BF-Poincare-gauge-transform}
\begin{align}
\de \om^{AB} \ & = \ -  d\mathpzc{U}^{AB} + f^{AB}_{CD,EF}\, \om^{CD}\mathpzc{U}^{EF}, \label{eq:Poincare-connection-transform-om} \\ 
\de e^A \ & = \ - d\mathpzc{u}^A + f^A_{CD,B}\left(\om^{CD}\mathpzc{u}^B-\mathpzc{U}^{CD}e^B\right), \label{eq:Poincare-connection-transform-e}\\
\de B^{AB} \ & = \ -f^{AB}_{CD,EF}\mathpzc{U}^{CD}B^{EF}-\mathpzc{u}^{[A}\be^{B]} +D\Xi^{AB} - \xi^{[A}\wedge e^{B]}, \label{eq:null-B-transform} \\ 
\de \be^A \ & = \ -f^A_{CD,B}\mathpzc{U}^{CD}\be^B + D\xi^A, \label{eq:null-T-transform} \\
\text{where} \quad f^{AB}_{CD,EF} \ : & = \ \eta^{\phantom{[}}_{E[C}\de^{[A}_{D]}\de^{B]}_{F\phantom{]}} - \eta^{\phantom{[}}_{F[C}\de^{[A}_{D]}\de^{B]}_{E\phantom{]}}, \quad  f^A_{CD,B} \ := \ \eta_{B[C}^{\phantom{AB}}\de^A_{D]}. \label{eq:Poincare-structure-constants}
\end{align}
\end{subequations}

We may combine the connection $\om\in\mathfrak{so}(3,1)$ and the gauge potential of translations $e$ into a single (Cartan) connection~\cite{Wise2010Cartan-geometry} of the Poincar\'{e} gauge group~\footnote{This is quite appealing from another point of view that the usual non-degeneracy condition on $\mr{det}\,e^A_a\neq0$, which is unclear how to achieve in practise, is also the requirement for the $\mathfrak{g}$-connection to be a Cartan connection.}:
\begin{equation}\label{eq:Poincare-connection}
\mathfrak{so}(3,1)\ltimes\mathfrak{p}^{3,1}\ \ni\ \varpi \ := \ \om + e \ = \ \om^{AB}\mc J_{AB} + e^C\mc P_C,
\end{equation}
whose generators $\mc J_{AB},\mc P_C$ satisfy the algebra (of which~\eqref{eq:Poincare-structure-constants} are structure constants):
\begin{align}
\begin{split}\label{eq:Poincare-algebra}
[\mc J_{AB},\mc J_{CD}] \ & = \ i  \left(\eta_{C[A}\mc J_{B]D}-\eta_{D[A}\mc J_{B]C}\right), \\ 
[\mc J_{AB},\mc P_C] \ & = \ i \, \eta_{C[A}\mc P_{B]}, \\ 
[\mc P_A,\mc P_B] \ & = \ 0.
\end{split}
\end{align}
A local gauge transformation, taking values in $G=SO(3,1)\ltimes\mathbb{P}^{3,1}$, can be split into
\begin{equation}\label{eq:Poincare-split}
g(x) \ = \ u(x)U(x), \qquad u \ = \ e^{-i\mathpzc{u}\mc P}, \qquad U \ = \ e^{-i\mathpzc{U}\mc J},
\end{equation}
s.t. $u(x)$ changes the zero section, i.e. changes the local identification of points of tangency at each spacetime event. The transformation law for the connection is then
\begin{equation}\label{eq:Poincare-gauge-transform}
\varpi \ \rightarrow \ \varpi' \ = \ g^{-1}(\varpi+d)g \ = \ U^{-1}u^{-1}(\om + e) u U + U^{-1}u^{-1}(du)U + U^{-1}dU,
\end{equation}
whose infinitesimal form is~\eqref{eq:Poincare-connection-transform-om}, \eqref{eq:Poincare-connection-transform-e}. The combined curvature transforms in the adjoint representation:
\begin{subequations}\label{eq:Poincare-adjoint-transform}
\begin{gather}
\mathpzc{F}[\varpi] \ \rightarrow \ \mathpzc{F}' \ = \ g^{-1}\mathpzc{F}g \ = \ U^{-1}u^{-1}(F+T)u U, \\
\intertext{or, infinitesimally:}
\de F^{AB} \ = \ f^{AB}_{CD,EF}F^{CD}\mathpzc{U}^{EF}, \qquad \de T^A \ = \ f^A_{CD,B} \left(F^{CD}\mathpzc{u}^B - \mathpzc{U}^{CD} T^B\right).
\end{gather}
\end{subequations} 

However, the analogous quantity, comprised of the conjugate variables  
\begin{equation}\label{eq:Poincare-B-field}
\mathfrak{so}(3,1)\ltimes\mathfrak{p}^{3,1}\ \ni\ \mathpzc{B} \ := \ B + \be \ = \ B^{AB}\mc J_{AB} + \be^C\mc P_C, 
\end{equation}
demonstrates slightly different behaviour~\eqref{eq:null-B-transform},\eqref{eq:null-T-transform}, compensating for~\eqref{eq:Poincare-adjoint-transform} in order to make the action invariant. The inhomogeneity is transferred from $\be$ to $B$ part. For instance, if $\be$ is of the form~\eqref{eq:T-multiplier}, the corresponding addition would be $-\de_\mathpzc{u}\star B^{AB}=\frac12 (\mathpzc{u}\cdot \mathpzc{b}) e^{[A}\wedge e^{B]} + \mathpzc{b}^{[A}e^{B]}\wedge(\mathpzc{u}\cdot e)$. We are tempted to interpret the latter as being responsible for mixing up the simple bivectors -- the symmetry which is usually broken in order to obtain General Relativity. In addition to the usual (internal) gauge transformations, the action~\eqref{eq:BF-Poincare} is also invariant w.r.t. the shifts
\begin{equation}\label{eq:null-topological-symmetry}
\de\varpi \ = \ 0, \qquad \de \mathpzc{B} \ = \ \left(D\Xi^{AB}-e^A\wedge\xi^B\right)\mc J_{AB} + D \xi^C\mc P_C \ \equiv \ D_\varpi(\Xi+\xi) ,
\end{equation}
which extends the usual `topological' BF symmetry. In fact, it is always possible to gauge away any local d.o.f.; however, if $\mc M$ is topologically non-trivial, $\varpi$ and $\mathpzc{B}$ can have non-trivial solutions globally (hence the name).

Despite the non-conventional form of~\eqref{eq:null-B-transform}, we still have obtained the right connection~\eqref{eq:Poincare-connection} and the algebra~\eqref{eq:Poincare-algebra}. So that allows us to conclude that we have constructed a topological theory of the BF type for the Poincar\'{e} gauge group. We may guess that the departure of $\mathpzc{B}$ transformation properties from the adjoint~\eqref{eq:Poincare-adjoint-transform} is a forced decision, due to the degenerate nature of the Killing form: for the semidirect product algebra with the abelean ideal of translations it reduces to $\Tr \big[\mathrm{ad}_{(U,u)}\circ\mathrm{ad}_{(V,v)}\big]=2\,\Tr \big[\mathrm{ad}_U\circ \mathrm{ad}_V\big]\propto\eta_{A[C}^{\phantom{b}}\eta_{D]B}^{\phantom{b}}=-\frac14 f^{GH\phantom{]}}_{AB,EF}f^{EF\phantom{]}}_{CD,GH}$. Thus, the temptation to read the expression in~\eqref{eq:BF-Poincare} as the Cartan-Killing pairing, similar to the $\int\bra \mathpzc{B}\wedge \mathpzc{F}\ket$ with semisimple group, faces obstacles. The contraction for $\be - T$ is performed with the $\mc E$-bundle metric, e.g. obtained from the $\mc J-\mc P$ vector couplings. We find this to be an interesting arena for the imposition of simplicity constraints.

\paragraph{On the diffeomorphisms.} Note that the topological shifts~\eqref{eq:null-topological-symmetry}, parametrized by Lie-algebra valued 1-forms, leave the spacetime indices intact, as another manifestation of the topological character of the theory. On the contrary, the diffeomorphisms of $\mc M$ are generated by the (horizontal) vector fields $\zeta \in \Ga(T\mc M)$, its infinitesimal action on tensors being given by the Lie derivative. Hence, the question of relation between the two symmetries is a subtle one. To clarify the issue one would have to give a definite answer to the following questions: Whether the diffeomorphism transformations constitute an \emph{independent} symmetry? Since the action functional is constructed in a coordinate-free manner, one expects it to be invariant under diffeomorphisms. However, among the derived Noether identities we do not immediately find the respective one with the right tensorial structure, which one could contract with the parameter vector field. 

One typically finds in the literature~\cite{Buffenoir-etal2004Hamiltonian-Plebanski}, the following reply to our query. Of course, we are free to form various linear combinations of~\eqref{eq:differential-identities}, allowing for the field-dependent coefficients. In such a manner, the following identity can be constructed from the basic ones: 
\begin{equation}\label{eq:null-diffeo-identity}
\int \left(B^{AB} \lrcorner \, \zeta\right) \wedge \mc I_{AB}^{(1)} + \left(\be^A \lrcorner \, \zeta \right) \wedge \mc I_A^{(2)} +\left(\om^{AB}\lrcorner \, \zeta \right) \mc I_{AB}^{(3)} + \left( e^A \lrcorner \, \zeta\right) \mc I_A^{(4)} \ = \ 0,
\end{equation}
where we contract with the vector field $\zeta = \zeta^a\pa_a$ using the interior product~\eqref{app:interior-product}. This leads to the transformations which are seemingly related to diffeomorphisms: 
\begin{equation}\label{eq:gauge-diffeos-relation}
\begin{aligned}
\mc L_\zeta^{\phantom{0}} \om_{AB}^{\phantom{0}} \ & = \ - \de \om_{AB}^{\phantom{0}} + \left(\zeta \lrcorner \, \frac{\de S_0}{\de B^{AB}}\right), \\
\mc L_\zeta^{\phantom{0}} e_A^{\phantom{0}} \ & = \ - \de e_A^{\phantom{0}} + \left(\zeta \lrcorner \, \frac{\de S_0}{\de \be^A}\right), \\
\mc L_\zeta^{\phantom{0}} B_{AB}^{\phantom{0}} \ & = \ - \de B_{AB}^{\phantom{0}} + \left(\zeta \lrcorner \, \frac{\de S_0}{\de \om^{AB}}\right), \\
\mc L_\zeta^{\phantom{0}} \be_A^{\phantom{0}} \ & = \ - \de \be_A^{\phantom{0}} + \left(\zeta \lrcorner \, \frac{\de S_0}{\de e^A}\right),
\end{aligned}
\end{equation}
with gauge parameters as in~\eqref{eq:null-diffeo-identity}; however, the equivalence between the two being valid only on-shell. 

This is peculiar, since it tights the transformation properties to the fact, whether the equations of motion are satisfied. We stress, in this regard, that the diffeos of the conventional 2nd order metric GR are the elementary gauge symmetries in the above sense, both at the Lagrangian and Hamiltonian levels~\cite{KiriushchevaKuzmin2011canonical-GR-myths}; the corresponding differential identities being that of the 2nd Bianchi's (contracted). One finds the resolution of conundrum just described somewhat unsatisfactory and will return to this in a separate publication. 

\subsection{Hamiltonian analysis of the Poincar\'{e} BF theory}

Having characterized the system completely at the covariant Lagrangian level, we now pass to studying it using the canonical approach. Ordinarily the Hamiltonian methods imply the manifest breaking of covariance by explicitly separating spatial from temporal field-components, and considering equal-time Poisson brackets. We would like to stress that the chosen preferred status of time coordinate in Hamiltonian analysis is not associated \textit{a priori} with an explicit separation of spacetime \textit{itself} into ``space and time'' (not at this stage at least), as suggested e.g. by the ADM change of coordinates and their geometrical interpretation. We therefore are being cautious with usage of such notions which are usually referred to as 3+1-decomposition, or slicing/splitting/foliation/etc. In particular, all 4d symmetries persist at the canonical level in the form of gauge generators, mapping solutions into solutions, as we show for this particular example.

Following the Dirac's general treatment of singular Lagrangian systems~\cite{Dirac1964lectures}, one starts by defining the conjugate momenta (for \textit{all} configuration variables):
\begin{subequations}
\begin{align}\label{eq:null-conjugate-momenta}
&&\Pi^a_{AB} &&& := & \frac{\de L_0}{\de \dot{\om}^{AB}_a}  &&& \equiv  && \left(\Pi^0_{AB},\Pi^i_{AB}\right) && \approx && \left(0,\frac12 \eps^{ijk}B_{jkAB}^{\phantom{jkA}}\right), && \\
&&\pi^a_A  &&& := & \frac{\de L_0}{\de \dot{e}^A_a} &&& \equiv && \left(\pi^0_A,\pi^i_A\right) && \approx && \left(0,\frac12 \eps^{ijk}\be_{jkA}^{\phantom{jkA}}\right), && \\
&&\Phi^{ab}_{AB}  &&& := & \frac{\de L_0}{\de \dot{B}^{AB}_{ab}} &&& \equiv && \left(\Phi^{0i}_{AB},\Phi^{jk}_{AB}\right) && \approx && \quad 0,  && \\
&&\phi^{ab}_A  &&& := & \frac{\de L_0}{\de \dot{\be}^A_{ab}} &&& \equiv && \left(\phi^{0i}_A,\phi^{jk}_A\right) && \approx  && \quad 0, &&
\end{align}
\end{subequations}
where dot denotes the time derivative of the field variables $\dot{q} = \pa_0 q$, and $\eps^{ijk} = \eps^{0ijk},\, i,j,k=1,2,3$. We have the totality of primary constraints for the generalized coordinates $q$ and momenta $p$, in the sense that none of the velocities $\dot{q}$ enter the above relations and cannot be inverted -- the system defined by $L_0$ is, thus, maximally singular. The conjugate pairs $(\om,\Pi),(e,\pi),(B,\Phi),(\be,\phi)$ satisfy the canonical commutation relations (c.c.r.):
\begin{subequations}\label{eq:null-ccr}
\begin{align}
\big\{\om_a^{AB}(\mathbf{x}),\Pi^b_{CD}(\mathbf{y})\big\} \ & = \ \de^{[A}_{\,C}\de^{B]}_D \de_a^b \de(\mathbf{x},\mathbf{y}), & \big\{e_a^A(\mathbf{x}),\pi^b_B(\mathbf{y})\big\} \ & = \ \de^A_B \de_a^b \de(\mathbf{x},\mathbf{y}), \\
\big\{B_{ab}^{AB}(\mathbf{x}),\Phi^{cd}_{CD}(\mathbf{y})\big\} \ & = \ \de^{[A}_{\,C}\de^{B]}_D \de_{[a}^{\,c}\de_{b]}^d \de(\mathbf{x},\mathbf{y}), & \big\{\be_{ab}^A(\mathbf{x}),\phi^{cd}_B(\mathbf{y})\big\} \ & = \ \de^A_B \de_{[a}^{\,c}\de_{b]}^d \de(\mathbf{x},\mathbf{y}).
\end{align}
\end{subequations}

One then constructs the Hamiltonian, schematically $H(q,p)=p\,\dot{q}-L(q,\dot{q},p)=\phi\,\dot{q}+H_{\mathrm{c}}(q,p)$, following Dirac~\cite{Dirac1964lectures}, sometimes also called `total' in order to distinguish it from the `canonical' part $H_{\mathrm{c}}$, which does not contain primary constraints $\phi$. It is straightforward to verify that $H_{\mathrm{c}}$ is indeed explicitly independent of the velocities $\dot{q}$, which enter as undetermined functions in front of $\phi$. By this rigorous procedure one gets for the Hamiltonian density:
\begin{align}
\ms H \ = & \ \Pi^a_{AB} \dot{\om}_a^{AB} + \pi^a_A \dot{e}_a^A + \Phi^{ab}_{AB}\dot{B}^{AB}_{ab} + \phi^{ab}_A\dot{\be}^A_{ab} - \ms L \nonumber \\
 = & \ \left(\Pi^i_{AB} -\frac12 \eps^{ijk}B_{jkAB}^{\phantom{jkAB}}\right)\dot{\om}_i^{AB} + \left(\pi^i_A -\frac12 \eps^{ijk}\be_{jkA}^{\phantom{jkA}}\right)\dot{e}_i^A + \Phi^{ij}_{AB}\dot{B}^{AB}_{ij} + \phi^{ij}_A\dot{\be}^A_{ij}  \nonumber \\
& \ \ + \Pi^0_{AB} \dot{\om}_0^{AB} + \pi^0_A \dot{e}_0^A + 2\Phi^{0i}_{AB}\dot{B}^{AB}_{0i} + 2\phi^{0i}_A\dot{\be}^A_{0i} + \ms H_{\mathrm{c}}, \label{eq:null-Hamiltonian}
\end{align}
the canonical part simply consists of spatial components of Lagrangian $\mathscr{H}_c=-\mathscr{L}\big|_{\dot{\varpi}=\dot{\mathpzc{B}}=0}$. It contains no momenta whatsoever at this stage, due to primary constraints, and we retain full covariance w.r.t. Lorentz indices. The form of $\mathscr{H}_c$ will be specified shortly, after reduction is made (compare with the detailed expressions unfold in the Hamiltonian analysis of closely related BFCG theory~\cite{Mikovic-etal2016Hamiltonian-BFCG-Poincare}).

Next we calculate the development of the primary constraints $\dot{\phi}=\{\phi,H\}\equiv\chi$ in order to find out the additional consistency requirements for them to preserve in time -- the secondary constraints $\chi\approx 0$. It may happen that some combinations of constraints form a second-class (sub)system, failing to commute. This is precisely our situation, since 
\begin{equation*}
\Big\{\Phi^{ij}_{AB},\Pi^{kCD} -\frac12 \eps^{klm}B_{lm}^{CD}\Big\} = \frac12 \eps^{ijk}\de^{[C}_{\,A}\de^{D]}_B , \qquad \Big\{\phi^{ij}_A,\pi^{kB} -\frac12 \eps^{klm}\be^B_{lm}\Big\} = \frac12 \eps^{ijk}\de^B_A.
\end{equation*}
Note that the second-class nature of the initial Lagrangian $L_0$ is not the specialty of our Poincar\'{e} modification but is common to any BF theory in various spacetime dimensions. This apparent fact of the full-fledged Dirac's generalized Hamiltonian analysis is often overlooked in the canonical description of BF and related theories~\cite{CMPR2012BF+Immirzi-Hamiltonian,Buffenoir-etal2004Hamiltonian-Plebanski,MontesinosCuesta2008BF-Cartan-Ndim,Perez2013SF-review} (with rare notable exceptions, e.g.~\cite{Escalante-etal2012Hamiltonian-BF-complete,Mikovic-etal2016Hamiltonian-BFCG-Poincare}).

The presence of second-class constraints signals about the degrees of freedom which are physically non-relevant, in our case these are spatial $B$ and $\be$ components. Their velocities are the Lagrange multipliers to be determined by requiring the time preservation of the corresponding second-class set -- this allows us to express them in terms of other variables (Lagrangian equations of motion):
\begin{equation}\label{eq:Lagrangian-velocities}
\begin{aligned}
\Big\{H, \Pi^i_{AB} -\frac12 \eps^{ijk}B_{jkAB}\Big\} \ & = \ \frac12 \eps^{ijk} \Big(\dot{B}_{jkAB}^{\phantom{jkAB}}+\om_{0A}^{\phantom{0A}C}B_{jkCB}^{\phantom{jkCB}}+\om_{0B}^{\phantom{jB}C}B_{jkAC}^{\phantom{jkAC}} -e_{0[A}\be_{B]jk} \\
& \: \qquad -2\Big(\pa_j^{\phantom{I}} B_{0kAB}^{\phantom{okAB}}+\om_{jA}^{\phantom{jA}C}B_{0kCB}^{\phantom{0kCB}}+\om_{jB}^{\phantom{jA}C}B_{0kAC}^{\phantom{0kAC}}-e_{j[A}\be_{B]0k}\Big)\Big)\ = \ 0 , \\
\Big\{H, \pi^i_A -\frac12 \eps^{ijk}\be_{jkA}\Big\} \ & = \ \frac12 \eps^{ijk} \Big(\dot{\be}_{jkA}^{\phantom{jkA}}+\om_{0A}^{\phantom{0A}B}\be_{jkB}^{\phantom{jkB}}-2\Big(\pa_j^{\phantom{I}} \be_{0kA}^{\phantom{0kA}}+\om_{jA}^{\phantom{jA}B}\be_{0kB}^{\phantom{0kB}}\Big) \Big) \ = \ 0 , \\
\Big\{\Phi^{ij}_{AB},H\Big\} \ & = \ \frac12 \eps^{ijk} \Big(\dot{\om}_{kAB}^{\phantom{kAB}} +\om_{0A}^{\phantom{0A}C}\om_{kCB}^{\phantom{kCB}} -\pa_k^{\phantom{I}}\om_{0AB}^{\phantom{0AB}} -\om_{kA}^{\phantom{kA}C}\om_{0CB}^{\phantom{0CB}}\Big) \ = \ 0 , \\
\Big\{\phi^{ij}_A,H\Big\} \ & = \ \frac12 \eps^{ijk} \Big(\dot{e}_{kA}^{\phantom{kA}} +\om_{0A}^{\phantom{0A}B}e_{kB}^{\phantom{kB}} -\pa_k^{\phantom{I}} e_{0A}^{\phantom{0A}} -\om_{kA}^{\phantom{kA}B}e_{0B}^{\phantom{0B}}\Big) \ = \ 0 .
\end{aligned}
\end{equation}

One can reduce the system by solving the second-class constraints as strong equations. The formal procedure includes passing to the Dirac brackets in order not to sum over variables, which have been thrown away (cf.~\cite{Escalante-etal2012Hamiltonian-BF-complete}). In our case the constraints are of special type, such that we can make a shortcut and simply solve for the spatial $B$ and $\be$ components (together with their identically vanishing momenta), since these just serve the purpose of identifying (the spatial part of) the $\varpi$-connection's conjugate momenta. The rest of the canonical commutation relations are unaltered, as can be easily verified, and one is left in~\eqref{eq:null-Hamiltonian} with the last line $\ms H'$, where prime now signals that $B,\be$ have been solved for $\Pi,\pi$. The rest of the primary constraints are first-class and all commute among themselves. They give rise to the secondary
\begin{subequations}\label{eq:secondary-constraints}
\begin{align}
&&&&\Big\{\Pi^0_{AB},H\Big\} &&& \equiv && \chi^0_{AB} && = && \mc D_i^{\phantom{I}}\Pi^i_{AB}- e^{\phantom{A}}_{i[A}\pi^i_{B]} && \approx && 0, &&&& \\
&&&&\Big\{\pi^0_A,H\Big\}  &&& \equiv && \chi^0_A && = && \mc D_i^{\phantom{I}}\pi^i_A && \approx && 0, &&&& \\
&&&&\Big\{\Phi^{0i}_{AB},H \Big\} &&& \equiv && \chi^{0i}_{AB} &&  = && \frac12 \eps^{ijk}\left(\pa_j^{\phantom{I}} \om_{kAB}^{\phantom{kAB}} + \om_{jA}^{\phantom{jA}C}\om_{kCB}^{\phantom{kCB}} \right) && \approx && 0, &&&& \\
&&&&\Big\{\phi^{0i}_A,H \Big\} &&& \equiv && \chi^{0i}_A && = && \frac12 \eps^{ijk}\left(\mc D^{\phantom{I}}_j e_{kA}^{\phantom{kA}} \right) && \approx && 0. &&&&
\end{align}
\end{subequations}
The derivative $\mc D_i^{\phantom{I}}$ is taken w.r.t. the spatial connection $\om_i^{AB}$~\footnote{The constraint $\chi^0_{AB}$ is again the modified Gauss law, whose appearance was anticipated in~\cite{GielenOriti2010Plebanski-linear}. Here we encounter no need to artificially enlarge the phase space, which follows naturally from the covariant action, together with the nice transformation properties.}. The canonical part of the Hamiltonian takes form 
\begin{equation}\label{eq:null-canonical-Hamiltonian}
-\ms H_{\mathrm{c}}' \ = \ \om_0^{AB}\chi^0_{AB} + e_0^A\chi^0_A +2B^{AB}_{0i}\chi^{0i}_{AB} +2\be^A_{0i}\chi^{0i}_A -\pa_i^{\phantom{|}}\! \left(\om_0^{AB}\Pi^i_{AB} +e_0^A\pi^i_A\right).
\end{equation}
The bulk contribution to $\ms H'$ vanishes as the sum of (the primary as well as secondary) constraints. We did not specify any form of the boundary conditions and kept the surface term explicit. A good cross check is the consistency between the Hamiltonian $\dot{f} = \{f,H'\}$ and the Lagrangian~\eqref{eq:Lagrangian-velocities} equations of motion, once the solution to second-class constraints is taken into account. We warn the reader not to discard the primary constraints from the outset. Although the present case of reduction is very simple, in general, it may affect the symplectic structure of the original action. Moreover, the primary constraints are essential for the equivalence between the Lagrangian and Hamiltonian formulations. By keeping only the (reduced) canonical part $H_{\mathrm{c}}'$ we cannot even address the gauge transformations on the full phase-space -- only the spatial ones. 

The completion of Dirac's procedure consists in proving that $\dot{\chi}\approx 0$ are conserved. This follows from the closure of the algebra:
\begin{equation}
\begin{aligned}\label{eq:null-constraint-algebra}
\left\{\chi^0_{AB},\chi^0_{CD}\right\} \ & = \ \chi^0_{C[A}\eta_{B]D}^{\phantom{0}}-\chi^0_{D[A}\eta_{B]C}^{\phantom{0}}, \\
\left\{\chi^0_{AB},\chi^0_C\right\} \ & = \ \chi^0_{[A}\eta_{B]C}^{\phantom{0}}, \\
\left\{\chi^0_{AB},\chi^{0i}_{CD}\right\} \ & = \ \chi^{0i}_{C[A}\eta_{B]D}^{\phantom{0i}}-\chi^{0i}_{D[A}\eta_{B]C}^{\phantom{0i}}, \\
\left\{\chi^0_{AB},\chi^{0i}_C\right\} \ & = \ \chi^{0i}_{[A}\eta_{B]C}^{\phantom{0i}}, \\
\left\{\chi^0_A,\chi^{0i}_B\right\} \ & = \ -\chi^{0i}_{AB},
\end{aligned}
\end{equation}
the rest of the commutators being trivially zero. 

It is worth at this point to perform the physical degrees of freedom count, in order to make sure that the theory is indeed topological. Starting from the initial phase space of dimensionality (which we denote by putting variables in brackets) $\big[\om^{AB}_a\big]+\big[\Pi^a_{AB}\big]+\big[e^A_a\big]+\big[\pi^a_A\big]+\big[B^{AB}_{ab}\big]+\big[\Phi^{ab}_{AB}\big]+\big[\be^A_{ab}\big]+\big[\phi^{ab}_A\big]=200$, we eliminate some variables through the strong second-class equalities $\big[B^{AB}_{ij}\big]+\big[\Phi^{ij}_{AB}\big]+\big[\be^A_{ij}\big]+\big[\phi^{ij}_A\big]=60$. Finally, we perform the symplectic reduction as follows: put the system on the surface of first-class constraints, then gauge away the redundant modes by factoring out the action of the first-class constraints. In effect, we subtract twice the amount of all the first-class constraints, taking in account that some of them are reducible (namely, the secondary constraints with the vector index are not independent, but related through the spatial Bianchi's identities): $2\cdot\left(\big[\Pi^0_{AB}\big]+\big[\pi^0_A\big]+\big[\Phi^{0i}_{AB}\big]+\big[\phi^{0i}_A\big]+\big[\chi^0_{AB}\big]+\big[\chi^0_A\big]+\big[\chi^{0i}_{AB}\big]+\big[\chi^{0i}_A\big]-\big[\mc D_i^{\phantom{I}} \chi^{0i}_{AB}\big]-\big[\mc D_i^{\phantom{I}} \chi^{0i}_A+e^B_i\chi^{0i}_{BA}\big]\right)=140$. We conclude that the theory is devoid of local degrees of freedom, the only relevant ones being that of global nature, those coming from non-trivial topologies. This makes it a potential candidate for spinfoam quantization.

\paragraph{The gauge generator.} If one expects the Hamiltonian picture to represent the original theory, then it has to be shown that it correctly reproduces results of the manifestly covariant approach, in particular, the gauge symmetries, in the form of canonical transformations. The Dirac's old conjecture that all first-class constraints do generate such a transformations was formalized later by Castellani~\cite{Castellani1982Hamiltonian-generator} and others into a precise algorithm. This procedure defines the gauge generator (for arbitrary functions of time $\varepsilon(t)$)
\begin{equation}\label{eq:Castellani-generator}
\mc G(t) \ = \ \sum_{n=0}^N \sum_\al \varepsilon^{(n)}_\al \mc G^\al_{(N-n)}, \qquad  \varepsilon^{(n)}_\al =\frac{d^n}{dt^n}\varepsilon_\al^{\phantom{\al}},
\end{equation}
through the chains of first-class constraints, unambiguously constructed once the set of primary ones (first-class) $\{\al\}$ is given. The multi-index $\al$ is linked to the tensorial structure of transformations, while $(N-n)$ gives the generation number (primary/secondary/tertiary/etc.). As a by-product, knowing the derivative order of gauge transformations, one can predict the overall number $N$ of generations of constraints, and vice versa. 

The chains $\mc G^\al_{(N-n)}$ in~\eqref{eq:Castellani-generator} are constructed iteratively as follows:
\begin{equation}\label{eq:Castellani-chains}
\begin{aligned}
\mc G^\al_{(0)} \ & = \ \text{primary}, \\
\mc G^\al_{(1)} + \big\{\mc G^\al_{(0)},H\big\}\ & = \ \text{primary}, \\
& \ \vdots \\
\mc G^\al_{(N)} + \big\{\mc G^\al_{(N-1)},H\big\}\ & = \ \text{primary}, \\
\big\{\mc G^\al_{(N)},H\big\}\ & = \ \text{primary}.
\end{aligned}
\end{equation}
In the present situation the primary ones are
\begin{equation}
\mc G^0_{(0)AB} = \Pi^0_{AB} , \qquad \mc G^0_{(0)A} = \pi^0_A , \qquad \mc G^{0i}_{(0)AB} = \Phi^{0i}_{AB}, \qquad \mc G^{0i}_{(0)A} = \phi^{0i}_A,
\end{equation}
and the procedure terminates already at the secondary $n=0,1$:
\begin{equation}
\mc G^\al_{(1)}(\mathbf{x}) \ = \ -\chi^\al(\mathbf{x}) + \int d^3\mathbf{y} \, \mathpzc{A}^\al_{\phantom{\al}\be}(\mathbf{x},\mathbf{y}) \phi^\be(\mathbf{y}),
\end{equation}
where coefficient kernels $\mathpzc{A}^\al_{\phantom{\al}\be}$ are fixed by the last requirement in~\eqref{eq:Castellani-chains} to close onto the primary constraint surface (we have the identical zero due to commutation $\{\phi,\phi\}=\{\phi,\chi\}=0$). Straightforward calculation gives the total (smeared) generator
\begin{equation}\label{eq:null-generator}
\mc G(\mathpzc{U},\mathpzc{u},\Xi,\xi) \ = \ -\ms J(\mathpzc{U}) - \ms P(\mathpzc{u}) + \ms F(\Xi) + \ms T(\xi)
\end{equation}
as a combination of elementary ones:
\begin{equation}\label{eq:null-generators-elementary}
\begin{aligned}
\ms J(\mathpzc{U}) \ & = \ \int \dot{\mathpzc{U}}^{AB}\Pi^0_{AB} - \mathpzc{U}^{AB}\left(2\om^{\phantom{0A}C}_{0A}\Pi^0_{CB}-e^{\phantom{0A}}_{0A}\pi^0_B +4B^{\phantom{0iA}C}_{0iA}\Phi^{0i}_{CB}-2\be^{\phantom{0iA}}_{0iA}\phi^{0i}_B + \chi^0_{AB}\right), \\
\ms P(\mathpzc{u}) \ & = \ \int \dot{\mathpzc{u}}^A\pi^0_A - \mathpzc{u}^A\left(\om^{\phantom{0A}B}_{0A}\pi^0_B +2\be^{B}_{0i}\Phi^{0i}_{BA} + \chi^0_A\right), \\
\ms F(\Xi) \ & = \ \int \dot{\Xi}^{AB}_i\Phi^{0i}_{AB} - \Xi^{AB}_i\left(2\om^{\phantom{0A}C}_{0A}\Phi^{0i}_{CB} + \chi^{0i}_{AB}\right), \\
\ms T(\xi) \ & = \ \int \dot{\xi}^A_i\phi^{0i}_A - \xi^A_i\left(\om^{\phantom{0A}B}_{0A}\phi^{0i}_B + e^B_0\Phi^{0i}_{BA} + \chi^{0i}_A\right).
\end{aligned}
\end{equation}
This generalizes the result for the canonical gauge generator of $SO(3,1)$ BF theory, reported in~\cite{Escalante-etal2012Hamiltonian-BF-complete}. The construction provides the correct transformation properties via
\begin{equation}
\de f \ = \ \{f, \mc G \},
\end{equation}
mapping solutions into solutions (gauge symmetry). Unlike the secondary constraints~\eqref{eq:secondary-constraints}, it acts on the full phase-space of the theory:
\begin{align*}
\de\om^{AB}_0 \ & = \ - \left(\dot{\mathpzc{U}}^{AB}+\om_{0\phantom{A}C}^{\phantom{0}A}\mathpzc{U}^{CB}+\om_{0\phantom{B}C}^{\phantom{0}B}\mathpzc{U}^{AC}\right), \\ 
\de \om^{AB}_i \ & = \ - \left(\pa_i^{\phantom{I}} \mathpzc{U}^{AB}+\om_{i\phantom{A}C}^{\phantom{i}A}\mathpzc{U}^{CB}+\om_{i\phantom{B}C}^{\phantom{i}B}\mathpzc{U}^{AC}\right), \\ 
\de e^A_0 \ & = \ - \left(\dot{\mathpzc{u}}^A+\om_{0\phantom{A}B}^{\phantom{0}A}\mathpzc{u}^B\right) +\mathpzc{U}^A_{\phantom{A}B}e^B_0, \\
\de e^A_i \ & = \ - \left(\pa_i^{\phantom{I}} \mathpzc{u}^A+\om_{i\phantom{A}B}^{\phantom{i}A}\mathpzc{u}^B\right) +\mathpzc{U}^A_{\phantom{A}B}e^B_i, \\
\de \be^A_{0i} \ & = \ \mathpzc{U}^A_{\phantom{A}B}\be^B_{0i} + \left(\dot{\xi}^A+ \om_{0\phantom{A}B}^{\phantom{0}A}\xi^B_i\right), \\
\de B^{AB}_{0i} \ & = \ \mathpzc{U}^A_{\phantom{A}C}B^{CB}_{0i} + \mathpzc{U}^B_{\phantom{B}C}B^{AC}_{0i} + \left(\dot{\Xi}^{AB}_i+\om_{0\phantom{A}C}^{\phantom{0}A}\Xi^{CB}_i+\om_{0\phantom{B}C}^{\phantom{0}B} \Xi^{AC}_i\right)-e^{[A}_0\xi^{B]}_i, \\
\de \Pi^0_{AB} \ & = \ \mathpzc{U}^{\phantom{A}C}_A\Pi^0_{CB}+\mathpzc{U}^{\phantom{B}C}_B\Pi^0_{AC}- \mathpzc{u}^{\phantom{A}}_{[A}\pi^i_{B]} - \Xi_{iA}^{\phantom{iA}C}\Phi^{0i}_{CB}-\Xi_{iB}^{\phantom{iB}C}\Phi^{0i}_{AC} +\xi^{\phantom{iA}}_{i[A}\phi^{0i}_{B]}, \\
\de \Pi^i_{AB} \ & = \ \mathpzc{U}^{\phantom{A}C}_A\Pi^i_{CB}+\mathpzc{U}^{\phantom{B}C}_B\Pi^i_{AC}- \mathpzc{u}^{\phantom{A}}_{[A}\pi^i_{B]} +\eps^{ijk}\left(\pa_j^{\phantom{I}} \Xi_{kAB}^{\phantom{kAB}} + \om_{jA}^{\phantom{jA}C}\Xi_{kCB}^{\phantom{kCB}} + \om_{jB}^{\phantom{jB}C}\Xi_{kAC}^{\phantom{kAC}}+e^{\phantom{jA}}_{j[A}\xi_{B]k}^{\phantom{Bk}}\right), \\
\de\pi^0_A \ & = \ \mathpzc{U}_A^{\phantom{A}B}\pi^0_B + \xi^B_i\Phi^{0i}_{BA}, \\
\de\pi^i_A \ & = \ \mathpzc{U}_A^{\phantom{A}B}\pi^i_B + \eps^{ijk}\left(\pa_j^{\phantom{I}}\xi_{kA}^{\phantom{kA}} + \om_{jA}^{\phantom{jA}B}\xi_{kB}^{\phantom{kB}}\right), \\
\de \phi^{0i}_A \ & = \ \mathpzc{U}_A^{\phantom{A}B}\phi^{0i}_B + \mathpzc{u}^B\Phi^{0i}_{BA}, \\
\de\Phi^{0i}_{AB} \ & = \ \mathpzc{U}_A^{\phantom{A}C}\Phi^{0i}_{CB}+\mathpzc{U}_B^{\phantom{B}C}\Phi^{0i}_{AC}.
\end{align*}
The correct covariant expressions~\eqref{eq:BF-Poincare-gauge-transform} for all the Lagrangian field components (spatial as well as temporal, using also the second-class relations $\de B^{AB}_{ij}=\eps_{ijk}^{\phantom{I}}\de\Pi^{kAB},\de \be^A_{ij}=\eps_{ijk}^{\phantom{I}}\de\pi^{kA}$), are reproduced within the Hamiltonian framework, thus exhibiting the equivalence between the two pictures. 

Using the Jacobi identity, the commutator between the two consecutive transformations is given:
\begin{equation}\label{eq:gauge-algebra}
(\de_1\de_2-\de_2\de_1)f \ = \ \{\{\mc G_1,\mc G_2\}, f\}.
\end{equation}
Its elementary constituents realize the generalized matrix commutators:
\begin{align*}
\{\ms J(\mathpzc{U}_1), \ms J(\mathpzc{U}_2)\} \ & = \ -\ms J([\mathpzc{U}_1,\mathpzc{U}_2]), & [\mathpzc{U}_1,\mathpzc{U}_2]^{AB} \ & = \  \mathpzc{U}^A_{1\; C}\mathpzc{U}_2^{CB} - \mathpzc{U}^B_{1\; C}\mathpzc{U}_2^{CA} ,\\
\{\ms J(\mathpzc{U}), \ms P(\mathpzc{u})\} \ & = \ -\ms P(\mathpzc{U}\rt\mathpzc{u}), & (\mathpzc{U}\rt\mathpzc{u})^{A\phantom{B}} \ & = \ \mathpzc{U}^A_{\phantom{A}B}\mathpzc{u}^B,\\
\{\ms J(\mathpzc{U}), \ms F(\Xi)\} \ & = \ -\ms F([\mathpzc{U},\Xi]), & [\mathpzc{U},\Xi]^{AB}_i \ & = \ \mathpzc{U}^A_{\phantom{A}C}\Xi^{CB}_i-\mathpzc{U}^B_{\phantom{B}C}\Xi^{CA}_i,\\
\{\ms J(\mathpzc{U}), \ms T(\xi)\} \ & = \ -\ms T(\mathpzc{U}\rt\xi), & (\mathpzc{U}\rt\xi)^{A\phantom{B}}_i \ & = \ \mathpzc{U}^A_{\phantom{A}B}\xi^B_i, \\
\{\ms P(\mathpzc{u}), \ms T(\xi)\} \ & = \ +\ms F([\mathpzc{u},\xi]), & [\mathpzc{u},\xi]^{AB}_i \ & = \ \mathpzc{u}_{\phantom{i}}^{[A}\xi^{B]}_i.
\end{align*}

We use a chance to comment here on the relation between the dynamics and gauge in reparametrization invariant systems (cf. ``the problem of time'').
The Hamiltonian -- generator of time evolution -- in such model is a combination of first-class constraints, which are also known to generate the gauge transformations (i.e. ``unphysical'' changes in the description of the system). Working on the full phase space allows to disentangle these notions: the specific combinations of first-class constraints are different for two objects $\ms H$ and $\mc G$; the key role is played by the primary set. 

One usually defines the notion of Dirac observables w.r.t. individual constraints $\{\chi,f\}=0$ (often disregarding the primary $\phi\approx0$, working on the smaller phase space). In the quantum theory, one represents the canonical variables via operators on the appropriate Hilbert space $\mathpzc{H}$ of states of the system, the Poisson (Dirac) brackets being replaced by a commutator $[\;,\,]=i\{\;,\,\}$. For instance, in our example:
\begin{equation}
\ms J(\mathpzc{U}) \ \mapsto \ \mathpzc{U}\cdot\hat{\mc J}, \qquad \ms P(\mathpzc{u}) \ \mapsto \ \mathpzc{u}\cdot\hat{\mc P}
\end{equation}
are the elements of the (local) Poincar\'{e} algebra~\eqref{eq:Poincare-algebra}. The Dirac prescription then consists in imposing on states $\hat{\chi}_\al|\Psi\ket =0$ individually for each $\al$, which are then consistent for the first-class system.

With the distinction just pointed out between $\ms H$ and $\mc G$ on the full phase space, the function of canonical variables may satisfy two a priori distinct conditions: $\{\ms H,F\} \ = \ 0$ and/or $\{\mc G,F\} \ = \ 0$. The first can be thought of as characterizing ``evolving constants of motion'', i.e. uniquely associated with the state -- solution of the e.o.m. The state itself, however, is not uniquely defined by the Cauchy data and depends on the arbitrary functions, entering the Hamiltonian (in our reduced case, the velocities of $(e,\om,\be,B)$'s temporal components, associated with the first-class primary constraints on momenta, are not defined by the evolution equations). Thus, the functionals of the first type may depend on the gauge choice for particular Hamiltonian, whereas the second condition then characterizes ``gauge-invariant'' functionals. 

Requiring the time preservation of vanishing of currents $\mc G^\al$ on every hypersurface, by construction we have then $0=\pa_t\mc G^\al =\{\ms H,\mc G^\al\}$, and the state has to satisfy Hamiltonian e.o.m. So that symmetry generators provide an example of the first type functionals (observables). The commutation relations~\eqref{eq:gauge-algebra} express the fact that the canonical (pre-)symplectic structure is degenerate on the constraint surface. One then passes to the quotient w.r.t. the gauge directions, by considering the gauge equivalent classes of solutions as ``physical'' states. The construction of the appropriate observable algebra is of primary importance for the quantization, especially in gravitational theories, so the Dirac's ``rule of thumb'' for \emph{all} first-class constraints should be applied with certain care. 


\section{Poincar\'{e}-Pleba\'{n}ski (re)formulation of GR}\label{sec:Poincare-Plebanski}

Now, when we have the frames $e$ at our disposal among the legitimate dynamical variables, it is straightforward to implement the simplicity of the bivectors, in order to reproduce gravity sub-sector. Multiple choices of how to do this are conceivable. First of all, one can simply replace $B \ra\star e\wedge e$ directly in the action integral:
\begin{equation}\label{eq:EC+0torsion}
S[e,\om,\be] \ = \ \int_{\mc M} \frac12 \eps_{ABCD}^{\phantom{ABCD}}\,e^A\wedge e^B\wedge F^{CD}+ \be_A^{\phantom{A}}\wedge T^A.
\end{equation}
Secondly, one could try to achieve the same effect via the Lagrange multipliers approach, imposing the simplicity in its most direct sense:
\begin{equation}\label{eq:Lambda-constraints}
S[\mathpzc{B},\varpi,\La] \ = \ S_0[\mathpzc{B},\varpi] + \int \La^{AB}
\wedge \left(B_{AB}^{\phantom{AB}}-\frac12\eps_{ABCD}^{\phantom{ABCD}}\,e^C\wedge e^D\right),
\end{equation}
with the free independent multiplier 2-forms $\La$.

It turns out that the simplicity constraints can also be put into form, linear in both $B$ and $e$, which is more in the vein of current Spin Foam models. In order to stay self-contained and explicit, let us formulate the following

\begin{lemma}[linear simplicity]
Provided that the tetrad field is non-degenerate, and hence the map $e$ is invertible, the bivector field $B$ is simple if and only if either of the two equivalent sets of constraints is satisfied:
\begin{equation}\label{eq:linear-simplicity-dual}
  \left\{\begin{aligned}
  \ast B_{AB}^{ab}\, e^B_c \ &= \ 0 &&&& \forall \, c\notin\{a,b\}, \\
  \ast B_{AB}^{ab}\, e^B_b \ = \ \ast B_{AB}^{ac}\, e^B_c \ & = \ \ast B_{AB}^{ad}\, e^B_d &&&& \forall \, a\notin\{b,c,d\},
  \end{aligned}\right.
  \qquad\Leftrightarrow\qquad
  \left\{\begin{aligned}
  B_{ABab}^{\phantom{AB}}\, e^B_c \  & = \ 0  &&&& \forall \, c\in\{a,b\}, \\
  B_{ABa(b}^{\phantom{AB}}\, e^B_{c)} \  & = \ 0 &&&& \forall \, c\notin\{a,b\}.
  \end{aligned}\right.
\end{equation}
\end{lemma}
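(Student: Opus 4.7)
\emph{Plan of proof.} I work pointwise and use the non-degeneracy of $e$ to pass to the adapted tetrad gauge $e^A_a=\delta^A_a$. In this frame both sets of indices take values in the same four-element set $\{0,1,2,3\}$, the bivector becomes a tensor $B_{AB,ab}$ antisymmetric in each pair, every tetrad contraction reduces to a Kronecker delta, and the Hodge dual on the left-hand side of~\eqref{eq:linear-simplicity-dual} is just a contraction with the Levi-Civita symbol. All the constraints in the lemma then become purely algebraic conditions on the components of $B_{AB,ab}$.

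For the ``if'' direction I substitute the gravitational ansatz $B^{AB}_{ab}=\frac12\,\epsilon^{AB}_{\phantom{AB}CD}\,e^C_a e^D_b$ (sector $II^{\pm}$ of~\eqref{eq:Plebanski-sectors}) into the right-hand set of~\eqref{eq:linear-simplicity-dual}. Each condition reduces either to an $\epsilon$-symbol with two coinciding indices (when $c\in\{a,b\}$) or to an antisymmetrisation over two equal slots (in the symmetrised case), and hence vanishes identically. The left-hand set is handled the same way after using $\ast B\propto e\wedge e$ for $B$ in sector $II^{\pm}$, which reduces its components to Kronecker deltas.

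For the ``only if'' direction I analyse the right-hand set componentwise in the adapted frame. The first line forces the components $B_{AB,ab}$ to vanish whenever $B$ (hence, by antisymmetry, $A$) lies in $\{a,b\}$, so the only surviving components have both internal indices in the complementary pair $\{c,d\}=\{0,1,2,3\}\setminus\{a,b\}$ and take the form $B_{AB,ab}=\mu_{ab}\,\epsilon_{ABab}$ (no sum) for some scalars $\mu_{ab}$. The symmetrised cross condition $B_{Ac,ab}+B_{Ab,ac}=0$ (for $c\notin\{a,b\}$), evaluated on $A$ equal to the remaining fourth direction, gives a sign-weighted relation between $\mu_{ab}$ and $\mu_{ac}$; running through the six unordered pairs via such triples forces all $\mu_{ab}$ to coincide with a single overall scalar $\lambda$. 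Restoring coordinates via $e$ yields $B^{AB}_{ab}=\lambda\,\epsilon^{AB}_{\phantom{AB}CD}\,e^C_a e^D_b$, i.e.\ $B$ is simple in the sector-$II^{\pm}$ sense, with $\lambda$ absorbable into the tetrad normalisation. The equivalence of the two sets then follows from $\ast^2=\pm\mathrm{id}$ on 2-forms: dualising $B\mapsto\ast B$ swaps the ``in-plane'' vanishing range $c\in\{a,b\}$ with the ``out-of-plane'' range $c\notin\{a,b\}$, and correspondingly exchanges the equality and symmetrisation pieces, so both sets cut out the same algebraic surface.

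I expect the main obstacle to lie in the sign bookkeeping of the symmetrised step: one must verify that the sign-weighted relations among the $\mu_{ab}$ close consistently over all six unordered pairs, neither over-determining the system (forcing $\mu\equiv 0$) nor leaving a residual sector-$I^{\pm}$ solution mixed in. This is precisely where the linear formulation should discriminate the gravitational sector $II^{\pm}$ from the topological Holst sector $I^{\pm}$, a distinction not available in the quadratic simplicity~\eqref{eq:Plebanski-cnstr}.
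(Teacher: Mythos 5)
Your proposal is correct and follows essentially the same route as the paper: your adapted-gauge component analysis is precisely the gauge-fixed version of the paper's expansion $\ast B_{AB}^{ab}=G^{ab}_{cd}\,e^c_{[A}e^d_{B]}$, with the first constraint line killing the off-diagonal coefficients and the second equating the remaining ones up to an overall normalization, and the duality between the two sets handled by the same complementary-index-pair observation. The sign bookkeeping you flag as the main obstacle does close consistently (the relevant $\eps$-symbols differ by a single transposition, giving $\mu_{ab}=\mu_{ac}$ directly), and your first-line step already excludes the sector-$I^{\pm}$ solutions, so no residual mixing occurs.
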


\begin{proof}
First, it is straightforward to verify that the two systems imply each other by noting that for some fixed $a\neq b$:
\begin{equation*}
\ast B_{AB}^{ab}\, e^B_{b'} \ = \ \sum_{cd} \frac12 \eps^{abcd}B_{ABcd}^{\phantom{A}} \, e^B_{b'} \ = \ \eps^{abcd}B_{ABcd}^{\phantom{A}} \, e^B_{b'} \quad \text{(no sum over $[cd]\neq[ab]$)}.
\end{equation*}
Assuming the conditions on the one side, the other one follows from here. Notice also that in our notation the dual becomes $\ast B_{AB}^{ab}=(\Si^{-1})_{AB}^{ab}$ on the constrained surface, where $\Si=e\wedge e$.

To show the necessity of these conditions for simplicity of bivectors, it is enough to cast $B=\star e\wedge e$ into form~\footnote{Having in mind the possible interpretation in terms of discrete geometry, \eqref{eq:3volume-simplicity} can be suggestively referred to as ``pyramid'', or ``3-volume'' form of simplicity constraints -- due to the nature of the object appearing on the right, and since the whole formula can be read as the expression for the volume of a pyramid with the base $\Si$ and a height $e$.}
\begin{subequations}\label{eq:pyramid-simplicity}
\begin{equation}\label{eq:3volume-simplicity}
B_{ABab}^{\phantom{AB}}\, e^B_c \ = \ \eps_{ABCD}^{\phantom{ABCD}}\, e^B_a e^C_b e^D_c \ = \ (\det e)\, e^d_A \, \eps_{dabc}^{\phantom{a}},
\end{equation}
or, equivalently:
\begin{equation}\label{eq:3volume-simplicity-dual}
\ast B_{AB}^{ab}\, e^B_c \ = \ (\det e) \, (e^a_A \de^b_c - e^b_A \de^a_c).
\end{equation}
\end{subequations}
Observing that the l.h.s. is already linear leads us to the (dual) analogue of `cross-simplicity' constraint in the first line of~\eqref{eq:linear-simplicity-dual}, when the r.h.s. in~\eqref{eq:pyramid-simplicity} is zero. In turn, the non trivial expression on the right in~\eqref{eq:pyramid-simplicity} restricts the l.h.s. in~\eqref{eq:3volume-simplicity} to be totally antisymmetric in $[abc]$, whereas it is independent of $b=c\neq a$ in~\eqref{eq:3volume-simplicity-dual}, leading to the second line of~\eqref{eq:linear-simplicity-dual}, respectively (no sum over spacetime indexes).

In order to demonstrate that the conditions~\eqref{eq:linear-simplicity-dual} are also sufficient, one follows the same reasoning as in~\cite{GielenOriti2010Plebanski-linear}. Namely, the generic bivector field can be expanded over the basis, spanned by the skew-symmetric products of $e$:
\begin{equation*}
\ast B_{AB}^{ab} \ = \ G^{ab}_{cd} \, e^c_{[A}e^d_{B]}, \qquad G^{ab}_{cd} \ = \ G^{[ab]}_{[cd]},
\end{equation*}
which after substitution into the first line of~\eqref{eq:linear-simplicity-dual} leads to
\begin{equation*}
\ast B_{AB}^{ab} \ = \ G_{ab}^{\phantom{a}} \, e^a_{[A}e^b_{B]} \qquad \text{(no sum over $ab$)}.
\end{equation*}
The individual normalization coefficients have to  satisfy symmetry $G_{ab}^{\phantom{a}} = G_{ba}^{\phantom{a}}$, $G_{aa}^{\phantom{a}} = 0$, but apart from that can be arbitrary. It is only after substitution of this ansatz into the second line of~\eqref{eq:linear-simplicity-dual} that we get the restriction
\begin{equation*}
G_{ab}^{\phantom{a}} \ = \ G_{ac}^{\phantom{a}} \ = \ G_{ad}^{\phantom{a}} \qquad \forall \, a\notin\{b,c,d\},
\end{equation*}
leading to the equality among all $G$'s. Thus, the $B$ is simple up to an overall factor, which can be eaten by appropriate normalization.
\end{proof}
  
The continuous formulation that we are advocating for is somewhat different from that of Gielen-Oriti's linear proposal~\cite{GielenOriti2010Plebanski-linear}, which uses 3-forms $\vartheta=\star e\wedge e\wedge e$, and bivectors $\Si = e\wedge e$ as independent variables, but rather represents its dual version. Before introducing the action principle, and in order to make closer contact between the two formulations, we first recall the corresponding constraint term in~\cite{GielenOriti2010Plebanski-linear} and notice that this can be rewritten as
\begin{equation}\label{eq:lin-simpl-cont}
\int d^4x \ \widetilde{\Xi}^{[ab][cde]}_A\Si^{AB}_{ab}\vartheta_{Bcde}^{\phantom{Bcde}}  \ = \ \int d^4x \ \Xi^{[ab]}_{Ac}\Si^{AB}_{ab}\tilde{\vartheta}^c_B.
\end{equation}
One can choose to work either with 3-forms~\eqref{eq:3-forms} or, equivalently, their dual densitiezed vectors:
\begin{equation}
\tilde{\vartheta}^a_A \ = \ \frac{1}{3!}\eps^{abcd}\vartheta_{Abcd}^{\phantom{A}} \ = \ (\det e^B_b)\, e^a_A.
\end{equation}
Correspondingly, the Lagrange multipliers $\Xi^{[ab]}_{Ac}$ should be exact tensors, s.t. $\widetilde{\Xi}^{[ab][cde]}_A = \frac{1}{3!}\eps^{cdef\phantom{]}}_{\phantom{Af}}\Xi^{[ab]}_{Af}$ -- tensor densities. The somewhat convoluted index symmetries that $\widetilde{\Xi}$ has to satisfy can be restated as the traceless condition on $\Xi^{[ab]}_{Ab}=0$, which upon variation then leads to the appearance of non-trivial Kronecker deltas on the right:
\begin{equation}\label{eq:Xi-variation}
\de \Xi \quad \Rightarrow \quad \tilde{\vartheta}^c_A\Si^{AB}_{ab} \ = \ \de^c_a v^B_b - \de^c_b v^B_a \qquad \text{for some} \ v^B_b.
\end{equation}
The antisymmetry in $[AB]$ and the tensorial nature of $\Si$ leave us no choice other than $v^A_a\propto \tilde{\vartheta}^A_a=e\, e^A_a$, and we get the simplicity up to an overall normalization, which is irrelevant. Applying the Hodge dual $\ast$, one restates this in terms of 3-forms, resulting from variation w.r.t. $\widetilde{\Xi}$, correspondingly:
\begin{equation}\label{eq:lin-simpl-dual}
\Si^{AB}_{ab}\vartheta_{Bcde}^{\phantom{A}} \ = \ v^A_a\eps_{bcde}^{\phantom{A}} - v^A_b \eps_{acde}^{\phantom{A}},
\end{equation}
which are essentially the original Gielen-Oriti's constraints.

Comparing~\eqref{eq:Xi-variation} with~\eqref{eq:3volume-simplicity-dual}, and juxtaposing them against the constraint term~\eqref{eq:lin-simpl-cont}, then suggests the respective least action principle in terms of dual variables $B\leftrightarrow\Si$ and $e\leftrightarrow\vartheta$, correspondingly:
\begin{equation}\label{eq:Poincare-Plebanski}
S_{\mathrm{PP}}[\mathpzc{B},\varpi,\Theta] \ := \ S_0[\mathpzc{B},\varpi] + \int 
\left(\Theta^A\lrcorner \, e^B\right)\wedge B_{AB}^{\phantom{AB}},
\end{equation}
which we coined, referring to its gauge group, the Poincar\'{e}-Pleba\'{n}ski formulation (although such a name might be as well attributed either to the ``$\La$-version'', or essentially to any formulation of this flavour). The $4\times4\times6=96$ Lagrange multipliers constitute the tangent $T\mc M$-valued 2-forms, that is
\begin{equation}\label{eq:PP-Lagrange-multipliers}
\Theta^A \ = \ \frac12 \Theta^{Ac}_{ab}\, dx^a\wedge dx^b \otimes \pa_c^{\phantom{a}}
\end{equation}
are the sections of the fiber bundle $\bigwedge^2 T^\ast\mc M\bigotimes T\mc M$. In the constraint term of the action~\eqref{eq:Poincare-Plebanski} they contract with tetrad 1-forms using the pairing $dx^a_{\phantom{b}}\lrcorner\,\pa_b^{\phantom{a}}=\de^a_b$ in the tangent vector index: $\Theta^A\lrcorner \, e^B=\frac12 \Theta^{Ac}_{ab}e^B_c dx^a\wedge dx^b$. The $\Theta$'s are restricted to be traceless $\Theta^{Ab}_{[ab]}=0$, that is possess the components of the form $\Theta^{Ac}_{ab}+\frac23\de^{\,c}_{[a}\Theta^{Ad}_{b]d}$; we can formulate this in the coordinate independent way as the full contraction with the canonical tangent-valued form on $\mc M$ being zero:
\begin{subequations}
\begin{equation}\label{eq:traceless-multipliers}
\theta_{\mc M} \ := \ dx^a\otimes \pa_a, \qquad \theta_{\mc M}\lrcorner\, \Theta^A \ = \ 0.
\end{equation}

Lets count the number of independent $\Theta$ components, in order to verify that we have enough of them to eliminate $36$ $B^{CD}_{cd}$ in favour of $16$ $e^A_a$. Apart from the $4\times4$ traceless conditions~\eqref{eq:traceless-multipliers}, from the contraction with $B$ in the action~\eqref{eq:Poincare-Plebanski} follow $10\times6$ antisymmetrization equations
\begin{equation}\label{eq:Lagrange-multipliers-symmetries}
\Theta^{(A}\lrcorner \, e^{B)} \ = \ 0,
\end{equation}
\end{subequations}
which $\Theta$ and $e$ have to satisfy. Subtracting from this $16$ d.o.f. corresponding to $e$'s (they just serve the purpose to isomorphically map indices $e(x):T_x\mc M\rightarrow \mathbb{R}^{3,1}$), we are left with $60-16=44$ additional requirements on $\Theta$. Thereby we get the total number of $96-16-44=36$ independent $\Theta$'s -- exactly the right amount to enforce simplicity.

It should be more or less evident after our exposition that the symmetries of the Lagrange multipliers lead to the variation, constrained by the system~\eqref{eq:linear-simplicity-dual}, depending whether we choose to vary w.r.t. $\Theta$ or its dualized version $\widetilde{\Theta}$. The lemma then implies that this is the same as performing variation on the simplicity constraint surface. The manifest presence of the Hodge-star $\star$ in constraint~\eqref{eq:Lambda-constraints} becomes shrouded, instead one has the restriction on the multipliers~$\Theta$. The free variation of $\de\La$ equates the constraint pre-factor to zero exactly, whilst for~$\de\Theta$ obeying additional conditions -- we get the non-vanishing expression on the r.h.s. In an analogous situation within the  standard Pleba\'{n}ski quadratic approach, the corresponding quantity on the right is usually interpreted in geometric terms as a definition of the 4-volume (on the solution of constraints), whilst a non-trivial symmetrization conditions are put on the l.h.s. It is these latter conditions that actually constitute the substance of the respective `volume' part of simplicity constraints. They require that the definition of the 4-volume be consistent, i.e. does not depend on the multiple choices that could be made for its parametrization on the l.h.s.
Note that in~\eqref{eq:pyramid-simplicity} we get the very same picture, now with the quantity on the r.h.s. being precisely the non-trivial \emph{3-volume} (cf~\eqref{eq:volume-unique}). At the same time this last bit now is \emph{``localized''} at the level of each tetrahedron, irregardless of the whole 4-simplex, which was the case for quadratic version~\eqref{volume-quadr}. Lastly, the analogue of the `cross-simplicity', when the r.h.s. is zero, now expresses that the corresponding (discrete) $e$ is collinear with the face $S_f$, being orthogonal to its dual bivector $B_f=\star\Si_f$.

These 3 \textit{a priori} distinct choices for constraint imposition, tabulated above, all seem to represent the same physical content. In either of the $\La$ or $\Theta$ versions, inserting further the solution for $B$ back into action, one reduces the initial topological theory~\eqref{eq:BF-Poincare} to that of~\eqref{eq:EC+0torsion}, that is the Einstein-Cartan action~\eqref{eq:Einstein-Cartan} supplemented with an extra term for (zero) torsion. At first sight, this might seem an excess, since the variations $\de\om$ of the EC-term alone incidentally give the vanishing of $De=0$ on-shell. However, the two theories are not identical: $\be$ plays the role of the Lagrange multiplier imposing the torsionless constraint $T=0$, which by the rule of procedure is required when we pass from the 2nd to 1st order formulation, in order to preserve the original dynamical content of the theory. Thus, we expect the equivalence should hold with the Einstein-Hilbert variational principle, and not with the ``Palatini variation'' method. The relations between different action principles can be schematically depicted in a diagram:
\begin{displaymath}
\begin{tikzcd}[column sep = large, row sep=large]
\text{1st order:} & S_{\mathrm{PP}}[\mathpzc{B},\varpi,\Theta] \arrow{r}{\de\Theta} \arrow{d}[swap]{\de \be}
& S_{\mathrm{EC}}[e,\om] +\int \be\wedge T \arrow{d}{\de \be}\\
\text{2nd order:} & S_{\mathrm{GO}}[B,e,\om[e],\Theta] \arrow{r}{\de\Theta} & S_{\mathrm{EC}}[e,\om[e]] \equiv S_{\mathrm{EH}}[e]
\end{tikzcd}
\end{displaymath}
In the bottom left corner appears a variant of the ``hybrid'' action of the form dual to that of Gielen-Oriti~\cite{GielenOriti2010Plebanski-linear}, but with the unique $e$-compatible torsion-free spin connection. This is to be contrasted with their 1st order formulation, where $\om$ is independent and the gauge status of non-dynamical $\vartheta\sim e$ is less clear, which enters a separate sequence:
\begin{displaymath}
\begin{tikzcd}
S_{\mathrm{PP}}[\mathpzc{B},\varpi,\Theta]+\int \mu\wedge \be \arrow{r}{\de\mu} & S_{\mathrm{GO}}[B,e,\om,\Theta] \arrow{r}{\de\Theta} & S_{\mathrm{EC}}[e,\om] .
\end{tikzcd}
\end{displaymath}

We stress that the reduction of the Einstein-Cartan theory to that of GR is achieved only \textit{on-shell} in vacuum, by solving the dynamical e.o.m. for $\om$. In contrast, one puts additional restrictions on the allowed variations of the generalized coordinates by the use of (non-dynamical) Lagrange multipliers $\Theta,\be$, which then acquire the physical meaning of ``reaction forces'', corresponding to variations that violate the constraints. The discussions of the relation between two approaches have been recurrent in the literature in the past, in particular, regarding the higher order Lagrangians and matter couplings (e.g., see~\cite{Kichenassamy1986Lagrangian-multipliers-grav} and references therein). It is a firmly established fact that the constrained variations should lead to the same result as for the case where constraints have been already solved from the outset. On the other hand, by allowing arbitrary variations w.r.t. d.o.f. which were previously restrained to lie on the constrained surface~\footnote{Often referred to as ``Palatini variation'' -- though somewhat erroneously (see discussion in~\cite{Ferraris-etal1982Palatini-history}).} the presumed ``equivalence'' with the original (2nd order) set-up is skewed. One plainly does not possess the same d.o.f. in two approaches to variation. This might be of relevance for the precise form of dynamical symmetries, by the Noether's theorem, since the variations would in general contain terms off the constraint surface.

\section{Conclusions and outlook}\label{sec:summary}

In the first part of this work we reviewed the classical Pleba\'{n}ski formulation of gravity, which underlies current Spin Foam models of EPRL and FK. We considered both the quadratic version and the fully linear formulation with 3-forms (4d normals), paying a special attention to the implementation of the `volume' part of simplicity constraints. Our revision of its quadratic version in the symmetry reduced setting of cuboids revealed that one cannot replace it with the 3d closure condition, in general, contrary to triangulations. As result, there is no unique geometric 4-volume. The linear case puts non-trivial conditions on normals and bivectors, which ensure the existence of edges/matching of shapes/uniquely defined 3-volumes, in quite an intricate way. This prompts us to pass to the tetrad variables, instead of normals. 

In the second part of this work we considered the modification of the classical action principle by putting torsion to zero. To explore the consequences we studied in Sec.~\ref{sec:Poincare-BF} the corresponding change in the BF theory and its larger gauge group -- the Poincar\'{e} (affine) extension of the (homogeneous) Lorentz group -- developing our analysis in detail both at the Lagrangian and Hamiltonian levels. Our Dirac's generalized constraint analysis, in fact, shortcuts the derivation in~\cite{Mikovic-etal2016Hamiltonian-BFCG-Poincare}, corroborating also the equivalence with covariant framework by an actual construction of canonical gauge generator. The extended action presents a perfect ground for the imposition of simplicity constraints, as we have frames explicitly at our disposal. In Sec.~\ref{sec:Poincare-Plebanski} we present an alternative look at the linear simplicity constraints, dual to that of~\cite{GielenOriti2010Plebanski-linear}, and comment on the relations between various formulations.

One point to be noticed is that in our reformulation it becomes apparent that the tetrad field by and large presents in the EPRL model right away. It sneaks in disguise of a 4-vector normal to the tetrahedra, required for linear reformulation of constraints. Recalling the original Pl\'{e}banski's quadratic constraints, this was based on the appearance of tetrad in the action only in a specific combination, which could be collectively \emph{denoted} $B=\star e\wedge e$, and has to satisfy some algebraic relations. The same situation is encountered in LQG with the conjugate momenta and reality conditions~\cite{Alexandrov2006Reality-from-CovarintLQG}. However, let us stress that the role of soldering form as an independent entity is much more than that: it encodes all the metric properties and is directly related to diffeomorphisms, playing the part of the gauge potential of local translations. These roles of tetrad are hardly appreciated in a formulation, where it is basically excluded from consideration as a configuration variable, and diminished to just the momenta. In essence, upon a closer look the theory is indistinguishable from the classical Einstein-Cartan, or Einstein-Hilbert gravity (leaving aside the Barbero-Immirzi parameter), and one has to face the task of directly handling tetrads at the discrete and quantum levels.~\footnote{The similar views were recently advocated in~\cite{Charles2017simplicity-3d}, within the context of canonical 3d LQG.}

Despite the classical nature of our results, it is clear from the context in which we put the present work, that we expect the view developed here to be of relevance for Quantum Gravity, both at the level of path integral (SF) and for covariant canonical loop-quantization. In particular, the Pleba\'{n}ski approach applied to the BF-Poincar\'{e} theory displays minimal distinction from the correspondingly constrained 1st order formulation, thus reinforcing one's expectations for a better contact between the two quantization programs. As an outlook, let us observe several issues that one may face on this route.
\begin{itemize}
\item The choice of an appropriate discretization for a frame field and constraints. The guidance may be provided by a twofold nature of the frame field: 
\begin{itemize}
\item
On the one hand side, the vector $V=\mathbbm{R}^{3,1}$-valued 1-form $e$, being a gauge potential of translations, combines naturally with the homogeneous Lorentz part $\om\in\mathfrak{h}$ into a single Cartan connection 1-form $\varpi=\om+e$, taking values in larger algebra $\mathfrak{g}$. Treating both parts on the same footing -- at least in the topological BF case -- thus suggests a corresponding discretization in terms of generalized $G$-holonomies: besides the parallel transport along paths in $\mc M$ (given by the Ehresmann $H$-connection), a Cartan connection gives also a notion of `development on the model $G/H$-Klein geometry' (the ``rolling'', or translation of the point of tangency; see~\cite{Wise2010Cartan-geometry} and references therein)~\footnote{We assume, an incorporation of this notion might pave the way for a better control over diffeos in the discrete and quantum gravity.}.
\item
On the other side, in the constrained case of gravity $e$ provides the basis for the geometric `simple' $n$-forms (and for the $\om$'s conjugate momenta 2-forms, in this way). In the end, one could expect similarity of the discretization with the variant of Regge calculus that comes from the gauge theoretic approach to gravity~\cite{Caselle-etal1989Poincare-calculus,Gionti2005discrete-Poincare-gravity}. The discrete $e$-field is likely to appear in the `integrated' form (i.e. conjugated by the $H$-holonomies) in order to ensure gauge invariance. The dual form~\eqref{eq:3volume-simplicity} of simplicity constraints associates naturally a 3-volume normal vector to every boundary polyhedron, so the generalization to a higher valence case seems to be within the reach (e.g. using some variant of Minkowski's theorem).
\end{itemize}
\item
It is sensible to first gain some experience with the SF/loop quantization of the respective topological BF theory. The $H$-holonomy encodes the information about connection up to transformations, leaving source and target intact. Note that one gets an element of $G\supset H$ for the lift of a curve w.r.t. Cartan connection $\mathfrak{g}\cong\mathfrak{h}\oplus V$, where the latter isomorphism should hold as $\mathrm{ad}_H$-representations, for reductive Cartan geometries. The space of connections is now significantly larger, so that $H$-transformations cannot reduce the gauge d.o.f. sufficiently.

Allowing for more gauge transformations could deal with this issue, and indeed -- the full $\mathrm{ad}_G$ symmetry~\eqref{eq:Poincare-gauge-transform} suggests to use the basis of $G$-invariant spin networks in the topological BF case. Note that $\varpi$ is essentially Ehresmann connection in the $G$-bundle, corresponding to vertical automorphisms. However, the reduction to gravity attaches the affine frames to the manifold, while soldering the $\mathbb{R}^{3,1}$ part to the tangent directions, corresponding to (horizontal) diffeomorphisms. The connection here is Cartan's absolute parallelism (on $H$-bundle). This distinction seems to be crucial, and the usual procedure should be specialized correspondingly. 


\end{itemize}

\section*{Acknowledgments}
The author is grateful to Benjamin Bahr and Sebastian Steinhaus for stimulating discussions and for useful comments on an earlier draft of this paper. This work was funded by the project BA 4966/1-1 of the German Research Foundation (DFG).

\appendix
\renewcommand{\theequation}{\Alph{subsection}.\arabic{equation}}
\section*{Appendices}
\addcontentsline{toc}{section}{Appendices}
\renewcommand{\thesubsection}{\Alph{subsection}}

\subsection{Conventions and notation}
\label{app1}

We use the Latin letters from the beginning of the alphabet to denote the covariant field components: lowercase $a,b,c,...=0,1,2,3$ for the world tensors w.r.t. the holonomic coordinate basis, and capital $A,B,C,...=0,1,2,3$ for the (internal) Lorentz coordinates w.r.t. the orthonormal locally inertial frames. For the $3+1$ space/time split, the letters from the middle of the alphabet are used $i,j,k,...=1,2,3$ for the spatial field components.

Anti-symmetrization of indices is performed with the respective order $|S_n|=n!$ of the symmetry group in the denominator, s.t the projection property holds, and denoted by the square brackets:
\begin{equation}
t_{[a_1...a_n]} \ := \ \frac{1}{n!}\sum_{\pi\in S_n}\mathrm{sign}(\pi) \, t_{a_{\pi(1)}...a_{\pi(n)}}.
\end{equation}

The internal $\mathbb{R}^{3,1}$ comes with the Minkowski metric $\eta_{AB}=\mathrm{diad}(-1,1,1,1)$ and the totally anti-symmetric Levi-Civita symbol $\eps^{ABCD}_{\phantom{ABCD}}=4! \,\de^{[A}_{\, 0}\de^{B\phantom{|}}_{\, 1}\de^{C\phantom{|}}_{\, 2}\de^{D]}_{\, 3}$. (Note that, although we prefer ``mostly plus'' convention for the metric signature, we maintain the full Lorentz covariance and actually never use this explicitly.) The metric $\eta$ allows to freely raise and lower internal indices, identifying $\mathbb{R}^{3,1}$ with its dual, e.g. $\eps_{ABCD}^{\phantom{ABC|}}=\eta_{AA'}^{\phantom{A|}}\eta_{BB'}^{\phantom{B|}}\eta_{CC'}^{\phantom{C|}}\eta_{DD'}^{\phantom{D|}}\eps^{A'B'C'D'} =-4!\, \de_{[A}^{\, 0}\de_{B\phantom{|}}^{\, 1}\de_{C\phantom{|}}^{\, 2}\de_{D]}^{\, 3}$. It defines the (internal) Hodge-star duality operator (in arbitrary dimension $N$):
\begin{equation}
\begin{aligned}
&\star \ : & & \bigwedge^{N-n} \mathbb{R}^{N-1,1} && \ra && \bigwedge^{n} \mathbb{R}^{N-1,1} \ \quad \forall n,  && \\ 
&& & Q^{A_1...A_{N-n}} && \mapsto && \star Q_{A_1...A_n}^{\phantom{AD}} \ := \ \frac{1}{n!}\eps_{A_1...A_nB_{n+1}...B_N}^{\phantom{AD}}Q^{B_{n+1}...B_N}. &&
\end{aligned}
\end{equation}

Analogously, the spacetime totally-skew symbol $\eps^{abcd}_{\phantom{abc\phantom{|}}}$ is defined to have $\eps^{0123}=\eps_{0123}^{\phantom{012}}=1$ for both upper and lower indices. To it corresponds the respective Hodge-star $\ast$ on spacetime exterior algebra. The contraction properties are
\begin{equation}
\eps^{a_1...a_nc_{n+1}...c_N}\eps_{b_1...b_nc_{n+1}...c_N}^{\phantom{abc|}} \ = \ n!(N-n)! \,\de^{[a_1}_{\, b_1}\cdots\de^{a_n]}_{\, b_n},
\end{equation}
and similarly for the internal $\eps$, just with the minus sign, in particular $\eps^{ABCD}\eps_{ABCD}=-4!$. Working with $\eps$ allows to coveniently express the determinants, e.g. the elementary 4d volume spanned by the coordinate basis (co)vectors 
\begin{equation}\label{app:d^4x}
dx^a\wedge dx^b\wedge dx^c\wedge dx^d  \ = \ d^4x \, \eps^{abcd}.
\end{equation}
Intuitively, $\eps$ (in conjunction with the frame $e$) generalizes the flat Euclidean vector cross-product to arbitrary spacetimes and their subspaces. Thus, we get the dual vectors~\eqref{eq:3-forms}, representing locally the 3d volume normal to the elementary parallelepiped (converted to the orthogonal cuboid in the locally inertial frame $e$) of the hypersurface. Similarly, the simple bivectors~\eqref{eq:Plebanski-sectors} from the $II$-sector represent locally the 2d area normals to the surface's elementary parallelograms. 

We use the $\lrcorner\,$-symbol for contractions of tensors' components, employing the duality of elementary coordinate co/vectors $dx^a_{\phantom{b}}\lrcorner\,\pa_b^{\phantom{a}}=\de^a_b$. For example, the internal product of a vector with coordinate basis $n$-forms:
\begin{equation}\label{app:interior-product}
\pa_b^{\phantom{a}}\lrcorner \left( dx^{a_1}_{\phantom{b}}\wedge ... \wedge dx^{a_n}_{\phantom{b}}\right) \ = \ n! \, \de^{[a_1}_{\,b} dx^{a_2\phantom{\!\!]}}_{\phantom{b}}\wedge ... \wedge dx^{a_n]}_{\phantom{b}}.
\end{equation}
It is ordinarily clear from the context which of the components of tensors are being contracted.

\subsection{On the teleparallel ``gauge''}
\label{app3}
As a side remark, let us touch upon how one can alternatively arrive, starting from the same unconstrained action~\eqref{eq:BF-Poincare}, to the so called `teleparallel equivalent of GR'. In place of simplicity for $B$, one may choose to constrain~$\be$ in the original Poincar\'{e} BF action~\eqref{eq:BF-Poincare}. One can split up the generic metric-preserving connection into $e$-compatible (torsionless, Levi-Civita) part and contortion tensor $K$, respectively:
\begin{equation}
\om^{AB} \ := \ \om[e]^{AB} + K^{AB}, \qquad T^A \ = \ K^A_{\phantom{A}B}\wedge e^B.
\end{equation}
If we require $\be$ to be of the form:
\begin{equation}
\be_A^{\phantom{A}} \ = \ \frac12 \eps_{ABCD}^{\phantom{ABCD}}e^B\wedge K^{CD},
\end{equation}
we obtain the theory of distant parallelism with non-trivial torsion, written in components as follows:
\begin{equation}
S_{\|} \ = \ \int d^4x \, e\left( \la^{ab}_{AB} F^{AB}_{ab}-\frac14 T_{ab}^{\phantom{ab}c}T^{ab}_{\phantom{ab}c}+\frac12 T_{ab}^{\phantom{ab}c}T_c^{\phantom{c}ab}+T_{ac}^{\phantom{ac}a}T^{bc}_{\phantom{bc}b}\right), \qquad e \, \la^{ab}_{AB} \ := \ \frac12 \ast B^{ab}_{AB}.
\end{equation}
The relation to General Relativity is established via following identity:
\begin{equation}\label{eq:EC-teleparallel}
e\left( e^a_Ae^b_B F^{AB}_{ab}-\frac14 T_{ab}^{\phantom{ab}c}T^{ab}_{\phantom{ab}c}+\frac12 T_{ab}^{\phantom{ab}c}T_c^{\phantom{c}ab}+T_{ac}^{\phantom{ac}a}T^{bc}_{\phantom{bc}b}\right) \ = \ e R- \pa_a\left(e \, e^a_Ae^b_BK_b^{\phantom{B}AB}\right),
\end{equation}
where $R$ is the Ricci scalar curvature (built from Levi-Civita connection). That is for the vanishing curvature $F$ (of the so called Weitzenb\"{o}ck connection), the part quadratic in torsion differs from the Einstein's theory by a total divergence. The identity~\eqref{eq:EC-teleparallel} also demonstrates explicitly the difference between EC and EH Lagrangians.


\bibliography{LQG-SpinFoams-bibl}

\providecommand{\href}[2]{#2}\begingroup\raggedright\begin{thebibliography}{10}

\bibitem{Holst1996action}
S.~Holst, \emph{{Barbero's Hamiltonian derived from a generalized
  Hilbert-Palatini action}},
  \href{https://doi.org/10.1103/PhysRevD.53.5966}{\emph{Phys.Rev.D} {\bfseries
  53} (1996) 5966--5969},
  [\href{https://arxiv.org/abs/gr-qc/9511026}{{\ttfamily gr-qc/9511026}}].

\bibitem{Perez2013SF-review}
A.~Perez, \emph{{The Spin Foam Approach to Quantum Gravity}},
  \href{https://doi.org/10.12942/lrr-2013-3}{\emph{Living Rev. Rel.} {\bfseries
  16} (2013) 3}, [\href{https://arxiv.org/abs/1205.2019}{{\ttfamily
  1205.2019}}].

\bibitem{Barrett1995QG-TQFT}
J.~W. Barrett, \emph{Quantum gravity as topological quantum field theory},
  \href{https://doi.org/10.1063/1.531239}{\emph{J.Math.Phys.} {\bfseries 36}
  (1995) 6161--6179}, [\href{https://arxiv.org/abs/gr-qc/9506070}{{\ttfamily
  gr-qc/9506070}}].

\bibitem{Oeckl2003general-bndry}
R.~Oeckl, \emph{A ``general boundary'' formulation for quantum mechanics and
  quantum gravity},
  \href{https://doi.org/10.1016/j.physletb.2003.08.043}{\emph{Phys.Lett.B}
  {\bfseries 575} (2003) 318--324},
  [\href{https://arxiv.org/abs/hep-th/0306025}{{\ttfamily hep-th/0306025}}].

\bibitem{EPRL-FK2008flipped2}
J.~Engle, R.~Pereira and C.~Rovelli, \emph{Flipped spinfoam vertex and loop
  gravity},
  \href{https://doi.org/10.1016/j.nuclphysb.2008.02.002}{\emph{Nucl.Phys.B}
  {\bfseries 798} (2008) 251--290},
  [\href{https://arxiv.org/abs/0708.1236}{{\ttfamily 0708.1236}}].

\bibitem{EPRL-FK2008finiteImmirzi}
J.~Engle, E.~Livine, R.~Pereira and C.~Rovelli, \emph{{LQG vertex with finite
  Immirzi parameter}},
  \href{https://doi.org/10.1016/j.nuclphysb.2008.02.018}{\emph{Nucl.Phys.B}
  {\bfseries 799} (2008) 136--149},
  [\href{https://arxiv.org/abs/0711.0146}{{\ttfamily 0711.0146}}].

\bibitem{EPRL-FK2008FK}
L.~Freidel and K.~Krasnov, \emph{{A new spin foam model for 4D gravity}},
  \href{https://doi.org/10.1088/0264-9381/25/12/125018}{\emph{Class.Quant.Grav.}
  {\bfseries 25} (2008) 125018},
  [\href{https://arxiv.org/abs/0708.1595}{{\ttfamily 0708.1595}}].

\bibitem{EPRL-FK2008LS}
E.~R. Livine and S.~Speziale, \emph{{Consistently Solving the Simplicity
  Constraints for Spinfoam Quantum Gravity}},
  \href{https://doi.org/10.1209/0295-5075/81/50004}{\emph{Europhys.Lett.}
  {\bfseries 81} (2008) 50004},
  [\href{https://arxiv.org/abs/0708.1915}{{\ttfamily 0708.1915}}].

\bibitem{AlexandrovRoche2011CovariantLQG-critique}
S.~Alexandrov and P.~Roche, \emph{{Critical Overview of Loops and Foams}},
  \href{https://doi.org/10.1016/j.physrep.2011.05.002}{\emph{Phys.Rept.}
  {\bfseries 506} (2011) 41--86},
  [\href{https://arxiv.org/abs/1009.4475}{{\ttfamily 1009.4475}}].

\bibitem{BaratinOriti2012BO-model}
A.~Baratin and D.~Oriti, \emph{{Group field theory and simplicial gravity path
  integrals: A model for Holst-Plebanski gravity}},
  \href{https://doi.org/10.1103/PhysRevD.85.044003}{\emph{Phys.Rev.D}
  {\bfseries 85} (2012) 044003},
  [\href{https://arxiv.org/abs/1111.5842}{{\ttfamily 1111.5842}}].

\bibitem{LivineDupuis2010lifting-SU2}
E.~R. Livine and M.~Dupuis, \emph{{Lifting $SU(2)$ Spin Networks to Projected
  Spin Networks}},
  \href{https://doi.org/10.1103/PhysRevD.82.064044}{\emph{Phys.Rev.D}
  {\bfseries 82} (2010) 064044},
  [\href{https://arxiv.org/abs/1008.4093}{{\ttfamily 1008.4093}}].

\bibitem{ReisenbergerRovelli1997SFfromLQG-sum}
M.~P. Reisenberger and C.~Rovelli, \emph{{``Sum over Surfaces'' form of Loop
  Quantum Gravity}},
  \href{https://doi.org/10.1103/PhysRevD.56.3490}{\emph{Phys.Rev.D} {\bfseries
  56} (1997) 3490--3508},
  [\href{https://arxiv.org/abs/gr-qc/9612035}{{\ttfamily gr-qc/9612035}}].

\bibitem{GielenOriti2010Plebanski-linear}
S.~Gielen and D.~Oriti, \emph{{Classical general relativity as BF-Plebanski
  theory with linear constraints}},
  \href{https://doi.org/10.1088/0264-9381/27/18/185017}{\emph{Class.Quant.Grav.}
  {\bfseries 27} (2010) 185017},
  [\href{https://arxiv.org/abs/1004.5371}{{\ttfamily 1004.5371}}].

\bibitem{Barrett-etal2009asympt4simplex-euclid}
J.~W. Barrett, R.~J. Dowdall, W.~J. Fairbairn, H.~Gomes and F.~Hellmann,
  \emph{{Asymptotic analysis of the EPRL four-simplex amplitude}},
  \href{https://doi.org/10.1063/1.3244218}{\emph{J.Math.Phys.} {\bfseries 50}
  (2009) 112504}, [\href{https://arxiv.org/abs/0902.1170}{{\ttfamily
  0902.1170}}].

\bibitem{Barrett-etal2010asympt4simplex-lorentz}
J.~W. Barrett, R.~J. Dowdall, W.~J. Fairbairn, F.~Hellmann and R.~Pereira,
  \emph{Lorentzian spin foam amplitudes: graphical calculus and asymptotics},
  \href{https://doi.org/10.1088/0264-9381/27/16/165009}{\emph{Class.Quant.Grav.}
  {\bfseries 27} (2010) 165009},
  [\href{https://arxiv.org/abs/0907.2440}{{\ttfamily 0907.2440}}].

\bibitem{KKL2010AllLQG}
W.~Kami\'{n}ski, M.~Kisielowski and J.~Lewandowski, \emph{{Spin-Foams for All
  Loop Quantum Gravity}},
  \href{https://doi.org/10.1088/0264-9381/27/9/095006}{\emph{Class.Quant.Grav.}
  {\bfseries 27} (2010) 095006},
  [\href{https://arxiv.org/abs/0909.0939}{{\ttfamily 0909.0939}}].

\bibitem{KKL2010correctedEPRL}
W.~Kami\'{n}ski, M.~Kisielowski and J.~Lewandowski, \emph{{The EPRL
  intertwiners and corrected partition function}},
  \href{https://doi.org/10.1088/0264-9381/27/16/165020}{\emph{Class.Quant.Grav.}
  {\bfseries 27} (2010) 165020},
  [\href{https://arxiv.org/abs/0912.0540}{{\ttfamily 0912.0540}}].

\bibitem{Dona-etal2017KKL-asumpt}
P.~Don\'{a}, M.~Fanizza, G.~Sarno and S.~Speziale, \emph{{$SU(2)$ graph
  invariants, Regge actions and polytopes}},
  \href{https://doi.org/10.1088/1361-6382/aaa53a}{\emph{Class.Quant.Grav.}
  {\bfseries 35} (2018) 045011},
  [\href{https://arxiv.org/abs/1708.01727}{{\ttfamily 1708.01727}}].

\bibitem{BahrSteinhaus2016Cuboidal-EPRL}
B.~Bahr and S.~Steinhaus, \emph{{Investigation of the Spinfoam Path integral
  with Quantum Cuboid Intertwiners}},
  \href{https://doi.org/10.1103/PhysRevD.93.104029}{\emph{Phys.Rev.D}
  {\bfseries 93} (2016) 104029},
  [\href{https://arxiv.org/abs/1508.07961}{{\ttfamily 1508.07961}}].

\bibitem{FreidelSpeziale2010Twisted-geometries}
L.~Freidel and S.~Speziale, \emph{{Twisted geometries: A geometric
  parametrisation of $SU(2)$ phase space}},
  \href{https://doi.org/10.1103/PhysRevD.82.084040}{\emph{Phys.Rev.D}
  {\bfseries 82} (2010) 084040},
  [\href{https://arxiv.org/abs/1001.2748}{{\ttfamily 1001.2748}}].

\bibitem{FreidelGeillerZiprick2013cont-LQG-phase-space}
L.~Freidel, M.~Geiller and J.~Ziprick, \emph{{Continuous formulation of the
  Loop Quantum Gravity phase space}},
  \href{https://doi.org/10.1088/0264-9381/30/8/085013}{\emph{Class.Quant.Grav.}
  {\bfseries 30} (2013) 085013},
  [\href{https://arxiv.org/abs/1110.4833}{{\ttfamily 1110.4833}}].

\bibitem{FreidelZiprick2014Spinning-geometry}
L.~Freidel and J.~Ziprick, \emph{{Spinning geometry = Twisted geometry}},
  \href{https://doi.org/10.1088/0264-9381/31/4/045007}{\emph{Class.Quant.Grav.}
  {\bfseries 31} (2014) 045007},
  [\href{https://arxiv.org/abs/1308.0040}{{\ttfamily 1308.0040}}].

\bibitem{DittrichRyan2011simplicial-phase-space}
B.~Dittrich and J.~P. Ryan, \emph{Phase space descriptions for simplicial 4d
  geometries},
  \href{https://doi.org/10.1088/0264-9381/28/6/065006}{\emph{Class.Quant.Grav.}
  {\bfseries 28} (2011) 065006},
  [\href{https://arxiv.org/abs/0807.2806}{{\ttfamily 0807.2806}}].

\bibitem{MagliaroPerini2013asympt-Regge}
E.~Magliaro and C.~Perini, \emph{Regge gravity from spinfoams},
  \href{https://doi.org/10.1142/S0218271813500016}{\emph{Int.J.Mod.Phys. D}
  {\bfseries 22} (2013) 1350001},
  [\href{https://arxiv.org/abs/1105.0216}{{\ttfamily 1105.0216}}].

\bibitem{Bi2013PoincareBF}
Y.~Bi, \emph{Topological Field Theory and Observables}, Ph.D. thesis,
  University of New Brunswick, 1993.

\bibitem{GirelliPfeifferPopescu2008BFCG}
F.~Girelli, H.~Pfeiffer and E.~M. Popescu, \emph{{Topological Higher Gauge
  Theory - from BF to BFCG theory}},
  \href{https://doi.org/10.1063/1.2888764}{\emph{J.Math.Phys.} {\bfseries 49}
  (2008) 032503}, [\href{https://arxiv.org/abs/0708.3051}{{\ttfamily
  0708.3051}}].

\bibitem{Mikovic-etal2012BFCG-Poincare}
A.~Mikovi\'{c} and M.~Vojinovi\'{c}, \emph{{Poincar\'{e} 2-group and quantum
  gravity}},
  \href{https://doi.org/10.1088/0264-9381/29/16/165003}{\emph{Class.Quant.Grav.}
  {\bfseries 29} (2012) 165003},
  [\href{https://arxiv.org/abs/1110.4694}{{\ttfamily 1110.4694}}].

\bibitem{GegenbergMann1999Poincare-BF-gravity}
J.~Gegenberg and R.~B. Mann, \emph{{Gravity from topological field theory}},
  \href{https://doi.org/10.1103/PhysRevD.60.024010}{\emph{Phys.Rev.D}
  {\bfseries 60} (1999) 024010},
  [\href{https://arxiv.org/abs/hep-th/9902041}{{\ttfamily hep-th/9902041}}].

\bibitem{Plebanski1977}
J.~F. Pleba\'{n}ski, \emph{{On the separation of Einsteinian substructures}},
  \href{https://doi.org/10.1063/1.523215}{\emph{J.Math.Phys.} {\bfseries 18}
  (1977) 2511--2520}.

\bibitem{Celada-etal2016BF-review}
M.~Celada, D.~Gonz\'{a}lez and M.~Montesinos, \emph{{BF gravity}},
  \href{https://doi.org/10.1088/0264-9381/33/21/213001}{\emph{Class.Quant.Grav.}
  {\bfseries 33} (2016) 213001},
  [\href{https://arxiv.org/abs/1610.02020}{{\ttfamily 1610.02020}}].

\bibitem{DePietriFreidel1999Plebanski}
R.~De~Pietri and L.~Freidel, \emph{{$so(4)$ Plebanski action and relativistic
  spin-foam model}},
  \href{https://doi.org/10.1088/0264-9381/16/7/303}{\emph{Class.Quant.Grav.}
  {\bfseries 16} (1999) 2187--2196},
  [\href{https://arxiv.org/abs/gr-qc/9804071}{{\ttfamily gr-qc/9804071}}].

\bibitem{Reisenberger1999Plebanski}
M.~P. Reisenberger, \emph{{Classical Euclidean general relativity from
  `left-handed area = right-handed area'}},
  \href{https://doi.org/10.1088/0264-9381/16/4/024}{\emph{Class.Quant.Grav.}
  {\bfseries 16} (1999) 1357--1371},
  [\href{https://arxiv.org/abs/gr-qc/9804061}{{\ttfamily gr-qc/9804061}}].

\bibitem{Reisenberger1997state-sum}
M.~P. Reisenberger, \emph{{A lattice worldsheet sum for 4-d Euclidean general
  relativity}},  \href{https://arxiv.org/abs/gr-qc/9711052}{{\ttfamily
  gr-qc/9711052}}.

\bibitem{BarrettCrane1998Euclid}
J.~W. Barrett and L.~Crane, \emph{Relativistic spin networks and quantum
  gravity}, \href{https://doi.org/10.1063/1.532254}{\emph{J.Math.Phys.}
  {\bfseries 39} (1998) 3296--3302},
  [\href{https://arxiv.org/abs/gr-qc/9709028}{{\ttfamily gr-qc/9709028}}].

\bibitem{BarrettCrane2000Lorentz}
J.~W. Barrett and L.~Crane, \emph{{A Lorentzian signature model for quantum
  general relativity}},
  \href{https://doi.org/10.1088/0264-9381/17/16/302}{\emph{Class.Quant.Grav.}
  {\bfseries 17} (2000) 3101--3118},
  [\href{https://arxiv.org/abs/gr-qc/9904025}{{\ttfamily gr-qc/9904025}}].

\bibitem{BianchiDonaSpeziale2011Polyhedra}
E.~Bianchi, P.~Don\'{a} and S.~Speziale, \emph{Polyhedra in loop quantum
  gravity}, \href{https://doi.org/10.1103/PhysRevD.83.044035}{\emph{Phys.Rev.D}
  {\bfseries 83} (2011) 044035},
  [\href{https://arxiv.org/abs/1009.3402}{{\ttfamily 1009.3402}}].

\bibitem{Buffenoir-etal2004Hamiltonian-Plebanski}
E.~Buffenoir, M.~Henneaux, K.~Noui and P.~Roche, \emph{{Hamiltonian analysis of
  Plebanski theory}},
  \href{https://doi.org/10.1088/0264-9381/21/22/012}{\emph{Class.Quant.Grav.}
  {\bfseries 21} (2004) 5203--5220},
  [\href{https://arxiv.org/abs/gr-qc/0404041}{{\ttfamily gr-qc/0404041}}].

\bibitem{Engle2011Plebanski-sectors}
J.~Engle, \emph{{The Plebanski sectors of the EPRL vertex}},
  \href{https://doi.org/10.1088/0264-9381/28/22/225003}{\emph{Class.Quant.Grav.}
  {\bfseries 28} (2011) 225003},
  [\href{https://arxiv.org/abs/1107.0709}{{\ttfamily 1107.0709}}].

\bibitem{Alexandrov2010newSF-from-CovariantSU2}
S.~Alexandrov, \emph{New vertices and canonical quantization},
  \href{https://doi.org/10.1103/PhysRevD.82.024024}{\emph{Phys.Rev.D}
  {\bfseries 82} (2010) 024024},
  [\href{https://arxiv.org/abs/1004.2260}{{\ttfamily 1004.2260}}].

\bibitem{Alexandrov2008SF-from-CovariantLQG}
S.~Alexandrov, \emph{Spin foam model from canonical quantization},
  \href{https://doi.org/10.1103/PhysRevD.77.024009}{\emph{Phys.Rev.D}
  {\bfseries 77} (2008) 024009},
  [\href{https://arxiv.org/abs/0705.3892}{{\ttfamily 0705.3892}}].

\bibitem{Alexandrov2008SF-cnstr-revisited}
S.~Alexandrov, \emph{Simplicity and closure constraints in spin foam models of
  gravity}, \href{https://doi.org/10.1103/PhysRevD.78.044033}{\emph{Phys.Rev.D}
  {\bfseries 78} (2008) 044033},
  [\href{https://arxiv.org/abs/0802.3389}{{\ttfamily 0802.3389}}].

\bibitem{BahrBelov2017VolumeSimplicity}
B.~Bahr and V.~Belov, \emph{{Volume simplicity constraint in the
  Engle-Livine-Pereira-Rovelli spin foam model}},
  \href{https://doi.org/10.1103/PhysRevD.97.086009}{\emph{Phys.Rev.D}
  {\bfseries 97} (2018) 086009},
  [\href{https://arxiv.org/abs/1710.06195}{{\ttfamily 1710.06195}}].

\bibitem{MontesinosCuesta2008BF-Cartan-Ndim}
V.~Cuesta, M.~Montesinos, M.~Vel\'{a}zquez and J.~D. Vergara,
  \emph{{Topological field theories in n-dimensional spacetimes and Cartan's
  equations}},
  \href{https://doi.org/10.1103/PhysRevD.78.064046}{\emph{Phys.Rev.D}
  {\bfseries 78} (2008) 064046},
  [\href{https://arxiv.org/abs/0809.2741}{{\ttfamily 0809.2741}}].

\bibitem{MontesinosCuesta2007BF-Cartan}
V.~Cuesta and M.~Montesinos, \emph{{Cartan's equations define a topological
  field theory of the BF type}},
  \href{https://doi.org/10.1103/PhysRevD.76.104004}{\emph{Phys.Rev.D}
  {\bfseries 76} (2007) 104004}.

\bibitem{Wise2010Cartan-geometry}
D.~K. Wise, \emph{{MacDowell-Mansouri gravity and Cartan geometry}},
  \href{https://doi.org/10.1088/0264-9381/27/15/155010}{\emph{Class.Quant.Grav.}
  {\bfseries 27} (2010) 155010},
  [\href{https://arxiv.org/abs/gr-qc/0611154}{{\ttfamily gr-qc/0611154}}].

\bibitem{KiriushchevaKuzmin2011canonical-GR-myths}
N.~Kiriushcheva and S.~V. Kuzmin, \emph{{The Hamiltonian formulation of General
  Relativity: myths and reality}},
  \href{https://doi.org/10.2478/s11534-010-0072-2}{\emph{Cent. Eur. J. Phys.}
  {\bfseries 9} (2011) 576--615},
  [\href{https://arxiv.org/abs/0809.0097}{{\ttfamily 0809.0097}}].

\bibitem{Dirac1964lectures}
P.~A.~M. Dirac, \emph{Lectures on Quantum Mechanics}.
\newblock Belfer Graduate School of Sciences, Yeshiva University, New York,
  1964.

\bibitem{Mikovic-etal2016Hamiltonian-BFCG-Poincare}
A.~Mikovi\'{c}, M.~A. Oliveira and M.~Vojinovi\'{c}, \emph{{Hamiltonian
  analysis of the BFCG theory for the Poincar\'{e} 2-group}},
  \href{https://doi.org/10.1088/0264-9381/33/6/065007}{\emph{Class.Quant.Grav.}
  {\bfseries 33} (2016) 065007},
  [\href{https://arxiv.org/abs/1508.05635}{{\ttfamily 1508.05635}}].

\bibitem{CMPR2012BF+Immirzi-Hamiltonian}
M.~Celada and M.~Montesinos, \emph{{Lorentz-covariant Hamiltonian analysis of
  BF gravity with the Immirzi parameter}},
  \href{https://doi.org/10.1088/0264-9381/29/20/205010}{\emph{Class.Quant.Grav.}
  {\bfseries 29} (2012) 205010},
  [\href{https://arxiv.org/abs/1209.0396}{{\ttfamily 1209.0396}}].

\bibitem{Escalante-etal2012Hamiltonian-BF-complete}
A.~Escalante and I.~Rubalcava-Garc\'{i}a, \emph{{A pure Dirac's canonical
  analysis for four-dimensional BF theories}},
  \href{https://doi.org/10.1142/S0219887812500533}{\emph{Int. J. Geom. Methods
  Mod. Phys.} {\bfseries 09} (2012) 1250053},
  [\href{https://arxiv.org/abs/1107.4421}{{\ttfamily 1107.4421}}].

\bibitem{Castellani1982Hamiltonian-generator}
L.~Castellani, \emph{{Symmetries in Constrained Hamiltonian Systems}},
  \href{https://doi.org/10.1016/0003-4916(82)90031-8}{\emph{Annals Phys.}
  {\bfseries 143} (1982) 357}.

\bibitem{Kichenassamy1986Lagrangian-multipliers-grav}
S.~Kichenassamy, \emph{{Lagrange Multipliers in Theories of Gravitation}},
  \href{https://doi.org/10.1016/0003-4916(86)90038-2}{\emph{Annals Phys.}
  {\bfseries 168} (1986) 404--424}.

\bibitem{Ferraris-etal1982Palatini-history}
M.~Ferraris, M.~Francaviglia and C.~Reina, \emph{{Variational formulation of
  general relativity from 1915 to 1925 ``Palatini's method'' discovered by
  Einstein in 1925}}, \href{https://doi.org/10.1007/BF00756060}{\emph{Gen.
  Relat. Gravit.} {\bfseries 14} (1982) 243}.

\bibitem{Alexandrov2006Reality-from-CovarintLQG}
S.~Alexandrov, \emph{{Reality conditions for Ashtekar gravity from
  Lorentz-covariant formulation}},
  \href{https://doi.org/10.1088/0264-9381/23/6/002}{\emph{Class.Quant.Grav.}
  {\bfseries 23} (2006) 1837--1850},
  [\href{https://arxiv.org/abs/gr-qc/0510050}{{\ttfamily gr-qc/0510050}}].

\bibitem{Charles2017simplicity-3d}
C.~Charles, \emph{{Simplicity constraints: a 3d toy-model for Loop Quantum
  Gravity}},
  \href{https://doi.org/10.1103/PhysRevD.97.106002}{\emph{Phys.Rev.D}
  {\bfseries 97} (2018) 106002},
  [\href{https://arxiv.org/abs/1709.08989}{{\ttfamily 1709.08989}}].

\bibitem{Caselle-etal1989Poincare-calculus}
M.~Caselle, A.~D'Adda and L.~Magnea, \emph{{Regge calculus as a local theory of
  the Poincar\'{e} group}},
  \href{https://doi.org/10.1016/0370-2693(89)90441-3}{\emph{Phys.Lett.B}
  {\bfseries 232} (1989) 457}.

\bibitem{Gionti2005discrete-Poincare-gravity}
G.~Gionti, \emph{{Discrete gravity as a local theory of the Poincar\'{e} group
  in the first-order formalism}},
  \href{https://doi.org/10.1088/0264-9381/22/20/004}{\emph{Class.Quant.Grav.}
  {\bfseries 22} (2005) 4217--4230},
  [\href{https://arxiv.org/abs/gr-qc/0501082}{{\ttfamily gr-qc/0501082}}].

\end{thebibliography}\endgroup

\end{document}